\newenvironment{nobreaks}{\vbox\bgroup}{\egroup}
\definecolor{airforceblue}{rgb}{0.36, 0.54, 0.66}
\definecolor{darkergreen}{rgb}{0.0, 0.5, 0.0}
\definecolor{plotorange}{rgb}{0.9290, 0.6940, 0.1250}
\definecolor{plotblue}{rgb}{0, 0.4470, 0.7410}
\definecolor{plotgreen}{rgb}{0.4660, 0.6740, 0.1880}
\title{seMCD: Sequentially implemented Monte Carlo depth computation with statistical guarantees}
\date{}
\author{Felix Gnettner\footnote{Institut für Mathematische Stochastik, Fakultät für Mathematik, Otto-von-Guericke-Universit\"at Magdeburg, Germany. E-mail: felix.gnettner@ovgu.de}, \ Claudia Kirch\footnote{Institut für Mathematische Stochastik, Fakultät für Mathematik, Otto-von-Guericke-Universit\"at Magdeburg, Germany. E-mail: claudia.kirch@ovgu.de}, \ Alicia Nieto-Reyes\footnote{Departamento de Matemáticas, Estadística y Computación, Facultad de Ciencias, Universidad de Cantabria, Spain. E-mail: alicia.nieto@unican.es}}
\newcommand{\V}{\mathbb{V}\mathrm{ar}}
\newcommand{\E}{\mathbb{E}}
\renewcommand{\P}{\mathbb{P}}
\newcommand{\deq}{\overset{\mathcal D}{=}}
\newcommand{\bx}{\bm{x}}
\newcommand{\by}{\bm{y}}
\newcommand{\bZ}{\bm{Z}}
\newcommand{\bz}{\bm{z}}
\newcommand{\sB}{\mathfrak{B}}
\renewcommand{\le}{\leqslant}
\renewcommand{\ge}{\geqslant}
\renewcommand{\leq}{\leqslant}
\renewcommand{\geq}{\geqslant}
\newcommand*\rel@kern[1]{\kern#1\dimexpr\macc@kerna}
\newcommand*\widebar[1]{
  \begingroup
  \def\mathaccent##1##2{
    \rel@kern{0.8}
    \overline{\rel@kern{-0.8}\macc@nucleus\rel@kern{0.2}}%
    \rel@kern{-0.2}
  }
  \macc@depth\@ne
  \let\math@bgroup\@empty \let\math@egroup\macc@set@skewchar
  \mathsurround\z@ \frozen@everymath{\mathgroup\macc@group\relax}%
  \macc@set@skewchar\relax
  \let\mathaccentV\macc@nested@a
  \macc@nested@a\relax111{#1}%
  \endgroup
}
\newcommand*\wideunderbar[1]{%
  \begingroup
  \def\mathaccent##1##2{%
    \rel@kern{-0.2}%
    \underline{\rel@kern{0.2}\macc@nucleus\rel@kern{-0.8}}%
    \rel@kern{0.8}%
  }%
  \macc@depth\@ne
  \let\math@bgroup\@empty \let\math@egroup\macc@set@skewchar
  \mathsurround\z@ \frozen@everymath{\mathgroup\macc@group\relax}%
  \macc@set@skewchar\relax
  \let\mathaccentV\macc@nested@a
  \macc@nested@a\relax111{#1}%
  \endgroup
}
\theoremstyle{plain}
\newtheorem{Theorem}{Theorem}[section]
\newtheorem{Remark}[Theorem]{Remark}
\newtheorem{Lemma}[Theorem]{Lemma}
\newtheorem{Corollary}[Theorem]{Corollary}
\let\expandafter\oldproof\csname\string\proof\endcsname
\let\oldendproof\endproof
\renewenvironment{proof}[1][\proofname]{\oldproof[\bfseries #1]}{\oldendproof}
\newtheorem{Condition}{Condition}
\begin{document}
\maketitle
\begin{abstract}
    \noindent    
Statistical depth functions provide center-outward orderings in  spaces of dimension larger than one, where a natural ordering does not exist.  The numerical evaluation of such depth functions can be computationally prohibitive, even for relatively low  dimensions. We present a novel \textbf{se}quentially implemented \textbf{M}onte \textbf{C}arlo methodology for the computation of, theoretical and empirical, \textbf{d}epth functions and related quantities (seMCD), that  outputs an interval, a so-called seMCD-bucket, to which the quantity of interest belongs with a high probability prespecified by the user. For specific classes of depth functions, we adapt algorithms from sequential testing, providing finite-sample guarantees. For depth functions dependent on unknown distributions, we offer asymptotic guarantees using non-parametric statistical methods. In contrast to plain-vanilla Monte Carlo methodology the number of samples required in the algorithm is random but typically much smaller than standard choices suggested in the literature. The seMCD method can be applied to various depth functions, covering multivariate and functional spaces. We demonstrate the efficiency and reliability of our approach through empirical studies, highlighting its applicability in outlier or anomaly detection, classification, and depth region computation. In conclusion, the seMCD-algorithm can achieve accurate depth approximations with few Monte Carlo samples while maintaining rigorous statistical guarantees.

\end{abstract}

{\bf Keywords:} Sequentially implemented Monte Carlo, Depth-buckets, Anytime-valid probabilistic guarantees, High-dimensional data, Outlier detection, Anomaly detection, Classification.

\section{Introduction}
The natural ordering on the real line forms the basis for defining order statistics, quantiles, ranks and further non-parametric descriptive key values. In multivariate spaces, functional spaces or in a setting where object data are of interest, a natural order does not generally exist. Statistical depth notions \citep{ZuoSerfling, Alicia, AliciaLuisTukey} were conceived to fill in this void, offering a variety of center-outward orderings that entail e.g.\ rank-based inference \citep{LiuSingh}. Recurrent applications of depth are 
\begin{enumerate}
    \item[(i)] outlier or anomaly detection \citep[for multivariate data]{ChenDang2010,DangSerfling2010,mozharovskyi2024anomalydetectionusingdata},  \citep[for functional data]{FebreroGaleano2008,Hubert2015} and \citep[for object data like gene expression data]{Alicia2022},
    \item[(ii)] classification  \citep[for multivariate data]{Joernsten2004,GhoshChaudhuri2005,LiCuestaLiu2012,Lange2014} and  \citep[for functional data]{Cuevas2007Projection,Sguera2014},  and
    \item[(iii)] depth region computation \citep{Cascos2007,Mosler2013}. 
\end{enumerate}

These applications require computing the empirical depth of elements in the underlying space. Exact computation of empirical depth functions  is often costly or even impossible due to complex analytical calculations. For some well-known depth functions, however, computationally efficient algorithms for the exact computation have been proposed, e.g., \cite{Dyckerhoff1} for the empirical Tukey depth \citep{Tukey}. For computing the empirical simplicial depth \citep{LiuSimplicial} of a single point $\bz \in \mathbb{R}^d$ with respect to a sample of size $n$, the naive approach has computational complexity $O(n^{d+1})$, where the constants also depend on $d$. There exist more advanced exact algorithms in the literature with a computational complexity of $O(n \cdot \log n)$  for $d=2$ \citep[Theorem 3.2]{Aloupis2002}, $O(n^2)$ for $d=3$ and $O(n^4)$ for $d=4$ \citep{Cheng2001}. \cite{Afshani2016} prove that exact computation of the empirical simplicial depth is P-complete and W[1]-hard, if the dimension $d$ is part of the input. Moreover, they present an algorithm with complexity $O(n^d \log(n))$ for the case $d \geq 5$, which has been the first improvement on the trivial bound of $O(n^{d+1})$. For many other depth functions, e.g.\ the functional band depth and its modifications \citep{LR2009} (theoretical and with respect to large data sets) or the simplicial volume depth \citep{Oja1983,ZuoSerfling}, there are still no fast algorithms at all.

Monte Carlo approximations offer a remedy for these computational problems. The calculation of many empirical depth functions is based on sums that require the evaluation of a large number of summands, e.g.\ $O(n^{d+1})$ in the case of the simplicial depth. In such situations, the empirical depth function is often approximated by Monte Carlo simulations that terminate after a pre-specified number of iterations. However, the associated uncertainty inherent in these Monte Carlo methodologies is usually not taken into account. This may cause problems when it comes to making decisions based on the so-obtained values. Two recent exceptions are given in \cite{BaharavLai2023} and \cite{Briend2023}. The former noted that identifying the empirical simplicial median corresponds to the best arm identification problem with multi-armed bandits and provided an algorithm with probabilistic guarantees. The latter investigated the pointwise stochastic approximation quality of the random Tukey depth \citep{AliciaCuesta} as a point estimator for the Tukey depth.

Uncertainty quantification is also key when applying Monte Carlo methods to compute the outcome of statistical tests -- for example when test decisions are based on $p$-values obtained via bootstrapping or permutation methods. There, it is not necessary to compute the exact $p$-value, but it suffices to distinguish between $p<\alpha$ and $p>\alpha$.  \citet{Gandy2009,Gandy2020a,Gandy2020b} make use of this fact by providing a sequential Monte Carlo technique that can distinguish between these two scenarios. Their methodology comes with the anytime-valid guarantee that the probability of making the wrong decision between these two scenarios is smaller than a user-specified value. At the same time, these techniques typically require much fewer Monte Carlo samples than usually suggested in such contexts. 

Similarly, in the context of depth-related calculations, it often suffices to know the depth region instead of the precise value or a point estimation. In outlier or anomaly detection, it is enough to know that the depth value is smaller than e.g.\ $0.05$. In maximum depth classification, it is only necessary to get evidence that one depth value is smaller than the other. Inspired by the above papers,  we propose sequential Monte Carlo techniques  in order to provide new algorithms that output such depth information in all those situations and more. Our approach works as long as the quantity of interest can essentially be calculated as an expected value, or a monotonic transformation of such. We call it \textbf{se}quential \textbf{M}onte \textbf{C}arlo \textbf{D}epth (seMCD) methodology, which offers two main advantages: First, it typically requires fewer Monte Carlo samples than are usually suggested in the literature (such as $10^5$ as e.g.\ suggested by \citet[Example 4]{Ramsay2019}). Secondly, it comes with the guarantee that the decision is correct with a probability of at least $1-\alpha$, where $0<\alpha<1$ is a \emph{tolerance} parameter pre-specified by the user. In some situations, such a guarantee is given in a suitable asymptotic sense. 
 
The seMCD-methodology outputs a set of values that contains the quantity of interest with a user-chosen probability. This output set is denoted as a seMCD-bucket. In contrast to confidence sets, the finite family of possible output seMCD-buckets is also a user-specified input to the algorithm. While the methodology also gives a point value of the quantity of interest, the statistical guarantee is only attached to the bucket. Essentially, Monte Carlo samples are taken until this guarantee is achieved. Roughly speaking, the smaller the input buckets, the more Monte Carlo samples are required on average. We will also motivate the benefits of overlapping seMCD-buckets as input to the algorithm.

The paper is organized as follows: In Section~\ref{section_2_depth}, we introduce the class of depth functions and their applications, for which the seMCD-methodology of this paper can be  used. In Section~\ref{sec_main_seMCD} we explain in detail the algorithms behind the seMCD-methodology and how their statistical properties depend on properties of the chosen boundary sequences. This is followed, in Section~\ref{sec_boundary}, by a discussion of the construction of such boundary sequences, both parametrically in the Bernoulli-case and non-parametrically,  such that these properties are met. In Section~\ref{sec_buckets_simulations}, we illustrate the good performance of the seMCD-methodology by means of simulations and real data analyses, followed by some discussions. The appendix contains  more detailed information about important depth functions suitable for the seMCD-methodology.

\section{Depth functions and their typical applications}
\label{section_2_depth}

The methodology in this paper is designed for situations where the quantity of interest is based on depth functions and can be written as 
\begin{align}\label{E} 
\E(H(\bm \xi)).
 \end{align}
Here, $H$ is a known function, and $\bm \xi$ is a random element. Clearly, when $H$ is computable and drawing iid samples from the distribution of $\bm{\xi}$ is possible, the quantity $\E(H(\bm{\xi}))$ can be approximated by Monte Carlo methods. 

We refer to the depth functions, whose values can be represented as in \eqref{E}, as E-depth functions and give some explicit examples in Section \ref{Exdepth}, including both theoretic and empirical depth calculation. Section \ref{type_b} is devoted to other depth functions that also fit the framework. Meanwhile, in Section \ref{appl} we provide some examples of applications of statistical depth that fit into the seMCD-framework.  

\subsection{E-depth functions}\label{Exdepth}
Given a space $\mathbb{S}$, that is generally multivariate or functional, the depth of an element $\bz\in\mathbb{S}$ is calculated with respect to a distribution $P$ on $\mathbb{S}$. This depth value is denoted as $D(\bz;P)$. When $D(\bz;P)$ or a related depth-based quantity has a representation as in \eqref{E}, the distribution of $\bm{\xi}$ is often associated with $P$. Moreover, $H$ typically depends on $\bz$, as well as the depth function used (for instance, simplicial or band depth -- see  \eqref{hs} or \eqref{band} in Appendix~\ref{subsec_multivariate_depths_A}) and the application at hand (such as depth calculation or classification, see Section~\ref{appl}).

In practice, we want to estimate  the depth values using the realisations $\bz_1,\ldots,\bz_n$ of some random elements $\bZ_1,\ldots,\bZ_n\overset{\textrm{iid}}{\sim} P$. The standard estimator corresponding to \eqref{E} is of the form 
\begin{align}\label{eq_emp_depth_cond_expRV}
\E(H_{\bZ_1,\ldots,\bZ_n}(\bm \zeta)\,|\,\bZ_1,\ldots,\bZ_n),
\end{align}
with some function $H_{\bZ_1,\ldots,\bZ_n}$ related to $H$, and a suitable random element $\bm \zeta$; see \eqref{typeAdepth} and \eqref{Deallgemein} below for an example. The random element $\bm\zeta$ encodes the Monte Carlo draw, thus having a known distribution and being independent of $\bZ_1,\ldots,\bZ_n$. Hence, the corresponding estimate is given by 
\begin{align}\label{eq_emp_depth_cond_exp}
     \E(H_{\bZ_1,\ldots,\bZ_n}(\bm \zeta)\,|\,\bZ_1 =\bz_1,\ldots,\bZ_n=\bz_n) =\E(H_{\bz_1,\ldots,\bz_n}(\bm \zeta)).
\end{align}
As long as $H_{\bz_1,\ldots,\bz_n}$ can be computed and $\bm \zeta$ can be sampled from, it is possible to approximate\linebreak $\E(H_{\bz_1,\ldots,\bz_n}(\bm \zeta))$ by Monte Carlo methods. From a computational point of view, given a specific data realisation, the quantity of interest has again the form \eqref{E} with $H:=H_{\bz_1,\ldots,\bz_n}$ and $\bm{\xi}:=\bm \zeta$. Typically, the $H_{\bz_1,\ldots,\bz_n}$ related to the empirical depth is different from the $H$  corresponding to the theoretical depth and also the domain of the functions can differ entirely. 

The class of E-depth functions includes Type A depth functions \cite[Section 2.3.1]{ZuoSerfling} but is not limited to these.

\paragraph{Type A depth functions.} The theoretical depth of $\bz\in\mathbb{S}$ with respect to a distribution $P$ on $\mathbb{S}$ in the class of Type A depth functions is characterised by 
\begin{align}\label{DA}
    D_A(\bz;P) = \E \big(G(\bz;[\bm{\eta}_1,\ldots,\bm{\eta}_r]) \big),
\end{align} 
    where $\bm{\eta}_1, \ldots,\bm{\eta}_r \overset{\textrm{iid}}{\sim} P $  for some fixed natural number $r$, with $r$ and $G$ depending on the depth and possibly the dimension of the space. This naturally fits in \eqref{E} by setting $\bm{\xi}:=[\bm{\eta}_1,\ldots,\bm{\eta}_r]$ and $H(\bm{\xi}):=G(\bz; \bm{\xi})$. 

For applications, users are interested in estimators of theoretical depth values. These estimators are based on a random sample $\bZ_1,\ldots,\bZ_n$, $n\ge r$, because the underlying $P$ is generally unknown. 
Usually, $G$ is symmetric,  i.e.\ invariant under permutations of the $r$ arguments in $[\bm{\eta}_1,\ldots,\bm{\eta}_r]$. In this case, a popular estimator of $ D(\bz;P)$ with respect to $\bZ_1,\ldots,\bZ_n$ is given by the corresponding U-statistic
    \begin{align}\label{typeAdepth}
        D_A(\bz;\bZ_1, \ldots,\bZ_n) = {n \choose r}^{-1} \sum_{1 \leq i_1<...<i_r \leq n} 
        G(\bz; [\bZ_{i_1}, \ldots,\bZ_{i_r}]).
    \end{align} 
 Although the sum consists of only a finite number of terms, the total number of summands is given by $\binom{n}{r}$, i.e.\ of order $n^r$, which can be excessively large for efficient computation, even for moderate~$r$. This issue can be solved by using the seMCD framework instead, because the above sum can be written as an expectation. Indeed, some simple calculations reveal that
 \begin{align}
    \label{Deallgemein}
	    D_A(\bz;\bZ_1,\ldots,\bZ_n)&=
	    \E\left(\left. G(\bz;[ \bZ_{\zeta_1},\ldots,\bZ_{\zeta_r}])\right| \bZ_1,\ldots,\bZ_n\right),
    \end{align}
where  $(\zeta_1,\ldots,\zeta_r)$ is a random $r$-combination from the set $\{1,\ldots,n\}$ independent of $\bZ_1,\ldots,\bZ_n$ with $P((\zeta_1,\ldots,\zeta_r)=\bm{c})={n \choose r}^{-1}$ for any $\bm{c}\in \{(c_1,\ldots,c_r): 1\le c_1<c_2<\ldots<c_r\le n\}$. Consequently, this empirical depth fits the framework in \eqref{eq_emp_depth_cond_expRV}, and \eqref{eq_emp_depth_cond_exp}, with $\bm \zeta:=(\zeta_1,\ldots,\zeta_r)$ and $H_{\bZ_1,\ldots,\bZ_n}(\bm \zeta):=G(\bz;[ \bZ_{\zeta_1},\ldots,\bZ_{\zeta_r}])$. 

Among Type A depth functions, an important subclass is what we call Type A depth functions with indicator kernel. 
Examples of multivariate depth functions in this subclass are simplicial depth, spherical depth \citep{elmore2006}, lens depth \citep{LiuModarres2011} and $\beta$-skeleton depth \citep{yang2018beta}. While the Type A depth class was originally conceived in \cite{ZuoSerfling} for multivariate spaces, some functional depths also belong to it. Instances are the band depth based on two curves and the norm-based statistical depth family \citep{Zhao2024}, both in the subclass with indicator kernel. Other well-known instances are the h-depth \citep{Cuevas2007Projection} and the modified band depth \citep{LR2009} when they are based on two curves. 
More information on how these depth functions fit the framework is provided for multivariate spaces in Appendix~\ref{subsec_multivariate_depths_A}  and for functional spaces in Appendix~\ref{subsec_functional_depth_A}. 

\paragraph{Integrated depths.}
We refer to integrated depth functions   as those for which the theoretical depth of $\bz\in\mathbb{S}$ with respect to a distribution $P$ on $\mathbb{S}$  is given by 
\begin{align}\label{DI}
    D_I(\bz;P) = 
    \E \big(I(\bz;\bm{\eta}) \big),
\end{align} 
with $I$  depending on the depth and $P$  while $\bm\eta$ does not relate to $P$. The expectation in \eqref{DI} is with respect to $\bm{\eta}$.  A very well-known example is the integrated dual depth \citep{CuevasFraimanIntegrated2009}, with the integrated rank-weighted (IRW) depth \citep{Ramsay2019,Staerman2021} as a special case. For further details see Appendix~\ref{subsec_IRW_depth}.

\subsection{Monotonic E-depth functions}\label{type_b}
The seMCD-methodology is also suitable for calculating depth regions and detecting outliers  (see Section \ref{appl}) in combination with what we call monotonic E-depth functions. There, the quantity of interest is given by $F(\E(H(\bm\xi)))$, for a known strictly monotonic function $F$. In particular, this covers all Type B depth functions, such as the simplicial volume depth and the $L^p$ depth family \citep{ZuoSerfling}. Other depth functions that fit this framework are the spatial depth \citep{SerflingSpatial2002,SpatialDepth} and the  kernelized spatial depth \citep{Sguera2014}. 

In classification based on monotonic E-depth functions, some limitations occur for maximal-depth classification in combination with so-called overlapping buckets. The latter are introduced in Section~\ref{sec_new_overlapp}, and the limitations are discussed in Section~\ref{sec_transformation}.

\subsection{Typical applications of depth functions}\label{appl}
In the previous two sections, we have detailed several depth functions that fall within the framework of this paper. In this section, we will highlight some typical applications of depth functions, that can be treated with the seMCD-methodology, if combined with the above E-depth or monotonic E-depth functions. 

Recall that the seMCD-methodology outputs an interval of values, the seMCD-bucket, that contains the unknown quantity of interest $\E(H(\bm\xi))$, or $F(\E(H(\bm\xi)))$, with a user-chosen probability. We will highlight why obtaining such buckets is of most interest in important applications of depths functions.

\subsubsection{Depth calculation and level sets}\label{section_depth_level_sets}
In many situations, it is not necessarily the precise depth value that is of interest, but whether or not it falls within a certain range. In this context, the buckets correspond to user-specified level regions (possibly overlapping). For example, a choice of overlapping seMCD-buckets for a depth function with values in $[0,1/2]$ could be $\{[0,1/10),(1/20,3/20),(1/10,2/10),(3/20,5/20),\ldots, (7/20,9/20),(4/10,5/10]\}$.
The advantage of such an approach is that it typically requires fewer samples than standard Monte Carlo studies. In addition, the output seMCD-bucket contains the quantity of interest with a certain statistical guarantee.

\subsubsection{Outlier or anomaly detection}\label{section_outlier}
A well-known application of depth functions is outlier detection \citep[and the papers discussing it]{Hubert2015}. Data points with a depth smaller than some prespecified value $c_o$ are considered potential outliers. Of course, low depth values do not necessarily need to belong to outliers nor do outliers necessarily need to have small depth values. This is particularly the case for functional data \citep{Arribas}, where different depths will focus on different properties of the functions and thus potentially identify different observations as possible outliers. In this sense, depth-based outlier detection should be considered more as a method for detecting anomalies \citep{mozharovskyi2024anomalydetectionusingdata}, which may or may not occur naturally, but which require special attention by the statistician or data analyst. The use of depths for anomaly detection naturally fits into the framework of pre-specified buckets because the exact depth value is not of so much interest as whether or not it falls in a low-depth region with a depth smaller than $c_o$. In this case, natural seMCD-buckets are $[0,c_o), (c_o,\infty)$,  in addition to $(c_o-\epsilon,c_o+\epsilon)$ for some $0<\epsilon<c_o$ in the overlapping case.

\subsubsection{Depth-based classification}\label{subsec_classification}
Another natural application of depth buckets is classification. Consider a point $\bz$ and  $\ell$ classes, with $\ell \in \mathbb N$ fixed, with underlying empirical probability measures $\widehat P_{r_-1}, \ldots,\widehat P_{r_{\ell}}$. One approach is to determine a bucket for each empirical depth value $D(\bz,\widehat P_{r_1}), \ldots,D(\bz,\widehat P_{r_{\ell}})$ and then classify $\bz$ into the class with the highest bucket value. In order to avoid simultaneous assignment to multiple classes, it is recommendable to make use of a sufficiently fine grid of overlapping seMCD-buckets. 

For binary classification, based on the realisations of $\ell=2$  classes $\bx_1,\ldots,\bx_{r_x};\by_1,\ldots,\by_{r_y}$, with (empirical) E-depth functions \eqref{eq_emp_depth_cond_exp} it is possible to reduce the computational effort by directly approximating the difference in the two empirical depth values 
\begin{align}\label{eq_classification_difference}
    \E[H_{\bx_1,\ldots,\bx_{r_x}}(\bm \zeta)-H_{\by_1,\ldots,\by_{r_y}}(\bm \zeta)]
\end{align}
via Monte Carlo studies. Choosing $\bm{\xi}:=\bm \zeta$ and $H:= H_{\bx_1,\ldots,\bx_{r_x}}-H_{\by_1,\ldots,\by_{r_y}}$, this is again of the desired form $\E(H(\bm{\xi}))$. The maximum depth classifier will classify a value $\bz$ as belonging to the $\bx$-sample, if this expectation is positive, and to the $\by$-sample otherwise. Consequently, natural seMCD-buckets for this problem are $(-\infty,0)$, $(0,\infty)$. Additionally, a third bucket $(-\epsilon,\epsilon)$ for some value $\epsilon>0$ can be added in the overlapping case. For the restrictions that apply concerning the use of monotonic E-depth functions and overlapping buckets, see Section~\ref{sec_transformation}.

\section{Sequentially implemented Monte Carlo depth methodology} \label{sec_main_seMCD}
In this section, we first revisit standard Monte Carlo techniques that are applicable to approximate E-depths. We will refer to them as plain-vanilla Monte Carlo. In Section \ref{SeMCD} we  introduce our proposed \textbf{se}quential \textbf{M}onte \textbf{C}arlo \textbf{D}epth (seMCD) methodology as an alternative with statistical guarantees in the context of applications for E-depths. In Section~\ref{sec_transformation} we explain how to adapt the procedure for monotonic E-depths.
    
\subsection{Plain-vanilla Monte Carlo computation}\label{sec_PlainVanilla}
As explained in Section~\ref{Exdepth}, all E-depths (whether theoretical or empirical) are of the form $\E(H(\bm{\xi}))$ for a known and computable function $H$ and a random element $\bm\xi$ that one can easily take samples of. If the exact calculation of this expected value is not possible, such E-depths can easily be approximated by Monte Carlo simulations of the type
\begin{align}\label{eq_plainVanillaMC}
	\frac 1 N \sum_{j=1}^N H(\bm{\xi}_j),
\end{align}
where $\bm{\xi}_1,\ldots,\bm{\xi}_N$ are iid\ copies of $\bm{\xi}$. Not all depth functions can be approximated by sums. A prominent counterexample is the Tukey depth, as its empirical version involves taking an infimum over a function of samples but not a sum. Thus, the methodology in this paper is not applicable to such depths.   

The plain-vanilla Monte Carlo computation  has two associated problems:
\begin{itemize}
	\item The number $N$ of Monte Carlo repetitions usually has to be chosen quite large in order for the law of large numbers to give a precise enough approximation. This makes the plain-vanilla approach computationally quite costly. For example, for the IRW depth \citet[Example 4]{Ramsay2019} take $N=10^5$ and \cite{Staerman2021} recommend $N=100 \cdot d$. 
	\item Often, one is interested in making a subsequent decision based on the quantity of interest. For example, in anomaly detection, the question is whether the true depth value is smaller than a given threshold. In such a case,  there is no statistical guarantee that the plain-vanilla approximation is indeed on the same side of the decision region as the true value. This is particularly problematic if the true value is close to the decision threshold (or even exactly at it), because then even a small difference between the sample mean in \eqref{eq_plainVanillaMC} and the corresponding expected value can lead to a different decision.
\end{itemize}
The seMCD computation  gives such statistical guarantees and typically largely reduces the number of necessary Monte Carlo samples and, consequently, the computation time.  

\subsection{SeMCD computation with statistical guarantees}\label{SeMCD}
We will now first explain the seMCD methodology in combination with E-depths, where the quantity of interest is given by $h:=\E(H(\bm{\xi}))$.

The seMCD-algorithm takes the following user inputs: a tolerance $\alpha>0$ as well as a set of buckets, to which we refer as seMCD-buckets, typically covering any potential value of $h$. 
The seMCD-algorithm will output one of these buckets with the guarantee that it contains the true value with a probability (either exactly or asymptotically, depending on the chosen depth and methodology) greater than or equal to $1-\alpha$. Unlike confidence intervals, the potential output intervals are determined in advance by fixing the seMCD-buckets.

The algorithm takes Monte Carlo samples one by one until the desired guarantee is reached, i.e.\ until there is sufficient statistical evidence that one of the given seMCD-buckets contains the true value $h$.  The number of Monte Carlo samples required  is random, because it depends on the actual samples taken in any given run of the algorithm. It also depends on the given tolerance, the given seMCD-buckets and the typically unknown distribution of $H(\bm{\xi})$, including the unknown true value $h=\E(H(\bm{\xi}))$.  The smaller the $\alpha$ and the closer the true value is to the boundaries of the buckets in which it is contained, the more Monte Carlo samples are required to provide the statistical guarantee. In particular, smaller buckets typically require more Monte Carlo samples. However, in many cases, one can return the seMCD-bucket with far fewer Monte Carlo samples than are commonly used in plain-vanilla computations.

The seMCD-algorithm  borrows ideas from sequential testing and is  based on the following seMCD-process 
\begin{align}\label{eq_SM}
	S_N:=\sum_{j=1}^NH(\bm{\xi}_j), \quad N \in \mathbb N,
\end{align}
where $\{\bm{\xi}_j: j\ge 1\}$ are independent copies of $\bm{\xi}$. In addition, the algorithm is based on suitable boundary sequences chosen in such a way that the desired statistical guarantees hold. We discuss the construction of these boundary sequences in Section~\ref{sec_boundary}. Section~\ref{sec_gandy_approach} treats the case of $H(\bm{\xi})$ having a Bernoulli distribution with unknown success probability $h=\E(H(\bm{\xi}))$, as happens for all Type A depths with indicator kernel, including the simplicial depth. In this case, the statistical guarantees are exact.  Section~\ref{sec_sequential_construction} provides sequences in a non-parametric setting that give only asymptotic guarantees (in an appropriate sense, to be discussed there).  On the plus side, these bounds are suitable for all applications and E-depths of this paper. We also give an example where such non-parametric methods are unavoidable.

In the remainder of this subsection, we discuss how the algorithm works given such boundary sequences.
Subsections \ref{sec_two_buckets} and \ref{sec_new_nonoverlapp} are dedicated to the case in which we have non-overlapping seMCD-buckets. 
These two subsections will serve as motivation for our preferred proposed methodology that consists of overlapping buckets, and is developed in Subsection \ref{sec_new_overlapp}. In Section~\ref{sec_stat_guar} we give some insight into the statistical guarantees associated with the algorithms, before introducing a greedy algorithm in Section~\ref{sec_another_look_algorithm}.

\subsubsection{Two seMCD-buckets case}\label{sec_two_buckets}
Let us first explain the seMCD-algorithm in the simplest situation, where there are only two seMCD-buckets, $(-\infty,h_1)$ and $(h_1,\infty)$  for $h=\E (H(\bm{\xi}))$.  We will later elaborate on the openness at the split point $h_1$. In many cases, the set of potential values of $h$ is bounded. For example, many depths only take values between $0$ and $1$. If this is the case, the above seMCD-buckets are equivalent to $[0,h_1)$ and $(h_1,1]$. Two applications, for which two seMCD-buckets may be sufficient, are: (1) anomaly detection, with, for example, $h_1=0.05;$ and (2) binary maximum depth classification, with $h_1=0$ corresponding to equal depth within both samples.

The seMCD-algorithm in the two seMCD-buckets case is based on two boundary sequences, an upper boundary sequence $\{U_N^{(h_1)}:N\in\mathbb{N}\}$ and a lower boundary sequence $\{L_N^{(h_1)}:N\in\mathbb{N}\}$ with $L_N^{(h_1)}\le U_N^{(h_1)}$ for all $N\in\mathbb{N}$ that fulfil,
\begin{align}
	&\text{if }h=h_1,\quad \P_{h}\left( L_N^{(h_1)}<S_N<U_N^{(h_1)} \text { for all }N\in\mathbb{N}\right)\ge 1-\alpha, \tag{I} \label{I}\\
	&\text{if }h<h_1,\quad\P_h\left(S_N<U_N^{(h_1)} \text{ for all  }N\in\mathbb{N}\right)\ge 1-\alpha,\tag{II}\label{II}\\
	&\text{if }h>h_1,\quad\P_h\left(L_N^{(h_1)}<S_N\text{ for all  }N\in\mathbb{N}\right)\ge 1-\alpha,\tag{III}\label{III}
\end{align}
where $\alpha \in (0,1)$ is the chosen tolerance. In the non-parametric situation, the precise inequalities in \eqref{I}-\eqref{III} will be replaced by appropriate asymptotic statements,  see Section~\ref{sec_sequential_construction} for the details.

Usually, the assertions in \eqref{II} and \eqref{III} follow from those in \eqref{I} by monotonicity of the probabilities with respect to $h$.  For example, if the sequences associated with $\mathfrak{h}<h_1$ satisfy the monotonicity condition $U_N^{(\mathfrak h)}\le U_N^{(h_1)}$ for all $N\in\mathbb{N}$, then clearly \eqref{II} for $h=\mathfrak{h}$ follows from a condition as in \eqref{I}  with $h_1$ replaced by $\mathfrak h$. Similarly, if the sequences associated with $\mathcal{h}>h_1$ satisfy $L_N^{(h_1)}\le L_N^{(\mathcal h)}$,  then  \eqref{III} follows from \eqref{I} with $h_1$ replaced by $\mathcal h$.

The algorithm is described as follows:

\vspace{2mm}
\begin{mdframed}
\textbf{\underline{seMCD-algorithm for two seMCD-buckets}}\\[2mm]
Continue taking samples until, for the first time, \begin{itemize}
	\item either $S_N\ge U_N^{(h_1)}$: In this case, the algorithm outputs the seMCD-bucket $(h_1,\infty)$.
	\item or $S_N\le L_N^{(h_1)}$: In this case, the algorithm outputs the seMCD-bucket $(-\infty,h_1)$.
\end{itemize}
\end{mdframed}
\vspace{2mm}

Clearly, conditions \eqref{II} and \eqref{III} guarantee that the probability of stopping in a wrong seMCD-bucket is bounded by the chosen tolerance $\alpha$ for any $h\neq h_1$. Lemma~\ref{lem_new_stop}, in Section \ref{sec_stat_guar},  gives a sufficient condition under which the above seMCD-algorithm stops with probability one for $h\neq h_1$. For the split point $h=h_1$, i.e.\ directly at the decision boundary, \eqref{I} clearly shows that  the algorithm will never stop with a probability of at least $1-\alpha$. This is the reason why we  have defined the seMCD-buckets as open buckets. In Section~\ref{sec_new_overlapp} we will solve the non-stopping problem at $h=h_1$ by defining overlapping seMCD-buckets, following \cite{Gandy2020b}.

\begin{figure}
        \centering
        \begin{subfigure}[h]{0.325\columnwidth}
        \includegraphics[width=\columnwidth, trim={4.0cm 10cm 4.5cm 10cm},clip]{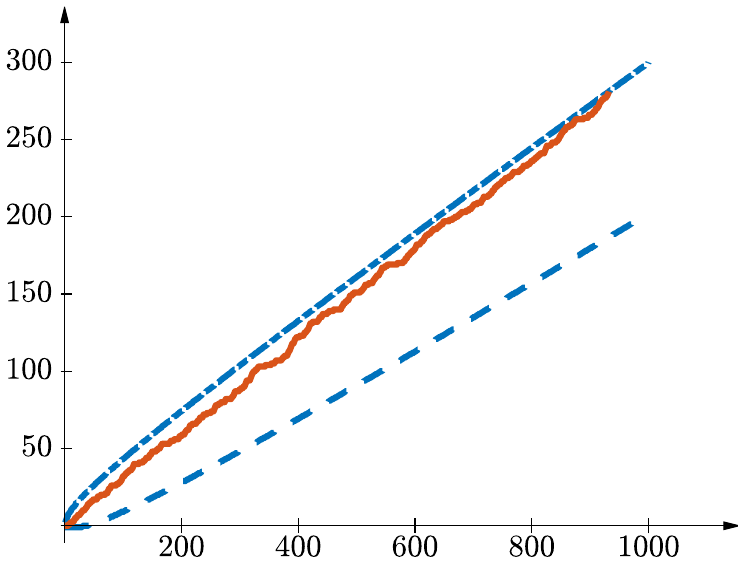}
        \vspace*{-4ex}
	\caption{\centering Decision for $h>0.25$ after\linebreak $N=932$ sampling steps\linebreak with a true $h$ of 0.3.}
    \end{subfigure}
    \begin{subfigure}[h]{0.325\columnwidth}
        \includegraphics[width=\columnwidth, trim={4.0cm 10cm 4.5cm 10cm},clip]{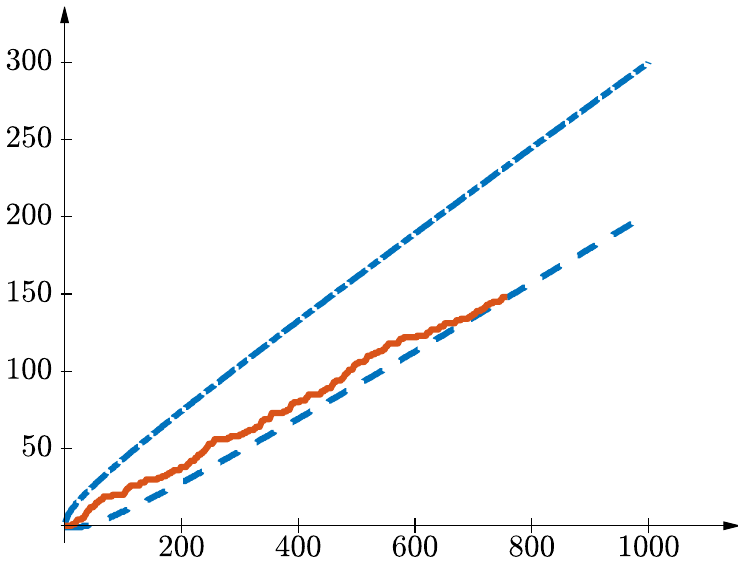}
        \vspace*{-4ex}
        \caption{\centering Decision for $h<0.25$ after $N=758$ sampling steps\linebreak with a true $h$ of 0.2.}
    \end{subfigure}
    \begin{subfigure}[h]{0.325\columnwidth}
        \includegraphics[width=\columnwidth, trim={4.0cm 10cm 4.5cm 10cm},clip]{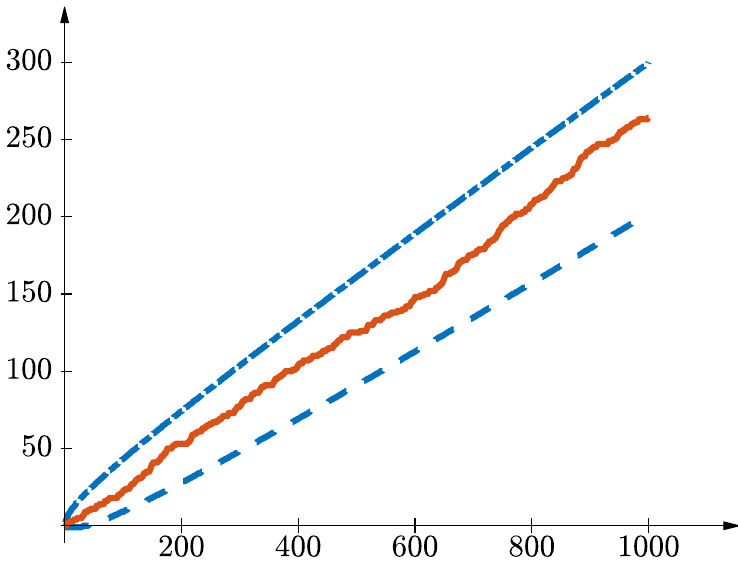}
        \vspace*{-4ex}
        \caption{\centering No decision after $N=1000$ sampling steps with a true $h$ equaling the split point $0.25$.}
    \end{subfigure}
    \caption{Upper boundary sequence $\{U_N^{(0.25)}:N \in\mathbb{N}\}$ (blue dashed-dotted) and lower boundary sequence $\{L_N^{(0.25)}:N \in\mathbb{N}\}$ (blue dashed) for tolerance $\alpha = 0.01$ and  $H(\bm{\xi})\sim\textrm{Bernoulli}(0.25)$. Three seMCD-sample paths $\{S_N: N \in\mathbb{N}\}$ representing different outcomes for stopping the algorithm: orange sequences.}
    \label{fig_stopping_two_outcomes}
\end{figure}

Figure \ref{fig_stopping_two_outcomes} gives an illustration of the seMCD-algorithm using Bernoulli$(h)$-distributed random variables $H(\bm{\xi}_j)\mbox{, } j\ge 1$, with $h=\E (H(\bm{\xi}_1)) \in \{0.2,0.25,0.3\}$ and the seMCD-buckets $(0,0.25),(0.25,1)$. This reflects the situation of Type A depths with indicator kernel including e.g.\ the simplicial depth.  In (a) the true value of $h$ equals $0.3$ and the algorithm returns the bucket $(0.25,1)$, while in (b) $h=0.2$ holds and a decision for the bucket $(0,0.25)$ is made. In (c) $h = 0.25$, and the algorithm returns no decision within the first 1000 steps. The boundary sequences were constructed as in Section \ref{sec_gandy_approach} using $\alpha = 0.01$ as tolerance and the default spending sequence

\begin{align}\label{eq_default_spending}
    \alpha_N = \alpha \cdot N/(1000+N), \quad N \in \mathbb N.   
\end{align}

In practice, the seMCD-algorithm will stop more quickly the farther the true $h$ is from the split point $h_1$. The number of samples taken before the algorithm stops is typically much smaller than the usual plain-vanilla Monte Carlo sample sizes.

\subsubsection{Non-overlapping multiple seMCD-buckets}\label{sec_new_nonoverlapp}
While returning only two possible seMCD-buckets is sufficient for some applications, this may\linebreak be too coarse a decision for others. Similarly to using depth regions, one may instead be\linebreak interested in distinguishing finitely many seMCD-buckets $(-\infty,h_1), (h_1,h_2), \ldots,(h_{k},\infty)$ with\linebreak $-\infty=h_0<h_1<...<h_{k}<h_{k+1}=\infty$, not necessarily equidistant, with split points $h_1,\ldots,h_k$ in the range of possible values of $h$. As before, if e.g.\ the range of possible values is given by $0\le h\le 1$, this is equivalent to setting $h_0=0$ and $h_{k+1}=1$. 

The seMCD-algorithm can be adapted to this non-overlapping multi-bucket situation as follows: For each split point  $h_j$, $j=1,\ldots,k$, we need lower and upper boundary sequences $\{L_N^{(h_j)}:N \in\mathbb{N}\}$ and $\{U_N^{(h_j)}:N \in\mathbb{N}\}$  as in Section~\ref{sec_two_buckets}, satisfying \eqref{I} -- \eqref{III} for the tolerance $\alpha$. We additionally require the following monotonicity condition to hold on the boundary sequences for arbitrary values:

\begin{Condition}\label{ass_Monotonnicity_boundary_sequences} 
	For any $h_{-}<h_+$ in the range of possible values of $h$ it holds  
	\begin{align} \label{eq_Monotonnicity_boundary_sequences}      
        U_N^{(h_-)} \leq U_N^{(h_+)} \qquad\text{and}\qquad L_N^{(h_-)} \leq L_N^{(h_+)}. 
    \end{align}
\end{Condition}

In Section~\ref{sec_stat_guar} we show that the statistical guarantees from the  two-bucket situation carry over to the following algorithm for multiple non-overlapping buckets. 

\vspace{2mm}
\begin{mdframed}
\textbf{\underline{seMCD-algorithm for multiple non-overlapping seMCD-buckets}}\\[2mm]
Define $U_N^{(h_0)}=-\infty$ and $L_N^{(h_{k+1})}=\infty$  
for all $N$.\\[1mm]
Continue taking samples until, for the first time $\tau$, it holds $\mathfrak{r}_\tau-\mathfrak{l}_\tau=1$, where, for arbitrary $N$,
\begin{align*}
&\mathfrak{l}_N := \max\{0\le j\le k: S_N\ge U_N^{(h_j)}\},\qquad \mathfrak{r}_N := \min\{\mathfrak{l}_N< j \le k+1: S_N\le L_N^{(h_j)}\}.
\end{align*}
Output the seMCD-bucket $(h_{\mathfrak{l}_{\tau}},h_{\mathfrak{r}_{\tau}})$.
\end{mdframed}

Figure \ref{fig_nonstopping_regions} (a) gives an illustration: As soon as the seMCD-process $S_N$ enters the white area above the green line, it stops and returns the bucket $(0.4,1]$. If it first enters the white area between the blue and green lines, then it stops and returns $(0.25,0.4)$. Between the orange and the blue lines it returns $(0.1,0.25)$ and below the orange line it returns $[0,0.1)$.
 As a general rule, if the lengths of the correct seMCD-buckets are smaller, more sampling steps can be expected before a decision is reached. 

As in the case of only two  seMCD-buckets, for $h$ equal to a split point, i.e.\ $h \in \{h_1, \ldots,h_{k}\}$, the seMCD-algorithm will never stop with a probability greater than $1-\alpha$. This corresponds to the situation, where the seMCD-process remains within the gray area in Figure \ref{fig_nonstopping_regions} (a), which happens with a probability of at least $1-\alpha$ for $h=h_j$, $j=1,\ldots,k$, by \eqref{I} for that $h$. 

\subsubsection{Overlapping seMCD-buckets}\label{sec_new_overlapp}
In order to overcome the non-stopping problem discussed above, we will follow the ideas in \cite{Gandy2020b} and use overlapping buckets instead. Consider now, $-\infty=h_0<h_1<...<h_k<h_{k+1}=\infty$, not necessarily equidistant, in the range of possible values for $h$ and consider the following overlapping seMCD-buckets:
\begin{equation}\label{eq_ordering_buckets}
\vspace{-1mm}
\begin{tikzpicture}[baseline=(current  bounding  box.center)]
\draw[thick,->](-0.5,0) -- (9.5,0);

\foreach \x in {1, 2,3, 4,5,6,7,8} {
        \draw (\x, 0.1) -- (\x, -0.1); 
        
    }

\foreach \x in {1,  3,7} {
        \draw (\x, 0.1) -- (\x, -0.1); 
        \node[above] at (\x, -0.7) {,};
        \node[above] at (\x+1, 0.1) {,};
        
    }
    \node[above] at (5,-0.7) {$\ldots$};
     \node[above] at (6,0.1) {$\ldots$};

    \node[above] at (0,-0.75) {$($};
    \node[above] at (0.5,-0.68) {$-\infty$};
    \node[above] at (1.5,-0.7) {$h_2$};
    \node[above] at (2.5,-0.7) {$h_2$};
        \node[above] at (2,-0.75) {$)($};
        \node[above] at (3.5,-0.7) {$h_4$};
        \node[above] at (4,-0.75) {$)$};
  \node[above] at (6,-0.75) {$($};
    \node[above] at (6.5,-0.75) {$h_{k-2}$};
    \node[above] at (7.5,-0.75) {$h_k$};
    \node[above] at (8,-0.75) {$)$};

      \node[above] at (1,0.1) {$($};
    \node[above] at (1.5,0.1) {$h_1$};
    \node[above] at (2.5,0.1) {$h_3$};
    \node[above] at (3.5,0.1) {$h_3$};
        \node[above] at (3,0.1) {$)($};
        \node[above] at (4.5,0.1) {$h_5$};
        \node[above] at (5,0.1) {$)$};  
         \node[above] at (7,0.1) {$($};
    \node[above] at (7.5,0.1) {$h_{k-1}$};
    \node[above] at (8.5,0.2) {$\infty$};
    \node[above] at (9,0.1) {$)$};

\end{tikzpicture}
\end{equation}

Unlike before, all split points $h_j$, $j=1,\ldots,k$, are now included in the interior of another seMCD-bucket. All  points between $h_1$ and $h_k$, that are not split points, are included in the interior of two different seMCD-buckets. In all cases, the algorithm will output one of the correct seMCD-buckets, or in exceptional cases their intersection, with the same guarantees as before. 
 
For applications as in Section~\ref{section_depth_level_sets}  with $0\le h\le 1$, we recommend a gapless covering of the parameter domain with seMCD-buckets of the same length, see \eqref{eq_ordering_buckets}, such that $h_j=(h_{j+1}-h_{j-1})/2$ with $h_0=0, h_1=1$, but other configurations are, of course, also possible. On the other hand, for anomaly detection as in Section~\ref{section_outlier} or classification as in Section~\ref{subsec_classification}, it makes sense to have a small third bucket covering the original decision boundary.

Unlike in the situation of non-overlapping seMCD-buckets, the algorithm will now stop under weak conditions for any $h$ including the split points, and usually much earlier than for non-overlapping seMCD-buckets or the usual plain-vanilla Monte Carlo sample sizes. Figure \ref{fig_nonstopping_regions} (b) illustrates this fact with an example: The sampling is guaranteed to stop if the seMCD-process $S_N$ reaches the white area. In this example, the gray area ends after the $M=494$ steps, so the seMCD-algorithm will have stopped for all realizations of the seMCD-process at that point.

\begin{figure}
        \centering
        \begin{subfigure}[h]{0.48\columnwidth}
        \includegraphics[width=\columnwidth, trim={4.0cm 10cm 4.5cm 10cm},clip]{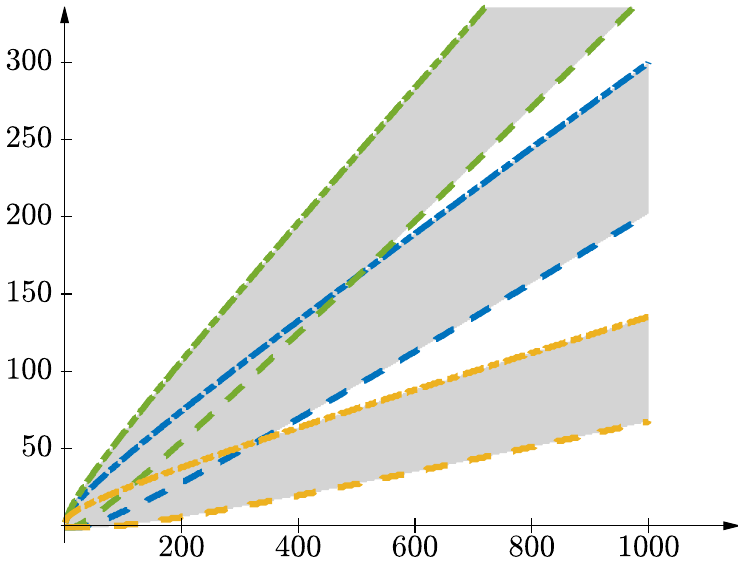}
        \vspace*{-4ex}
        \caption{\centering non-overlapping seMCD-buckets $[0,{\color{plotorange} 0.1})$, $({\color{plotorange} 0.1},{\color{plotblue} 0.25})$, $({\color{plotblue} 0.25},{\color{plotgreen} 0.4})$, $({\color{plotgreen} 0.4},1]$.}
        \captionsetup{justification=centering}
    \end{subfigure}
    \begin{subfigure}[h]{0.48\columnwidth}
        \includegraphics[width=\columnwidth, trim={4.0cm 10cm 4.5cm 10cm},clip]{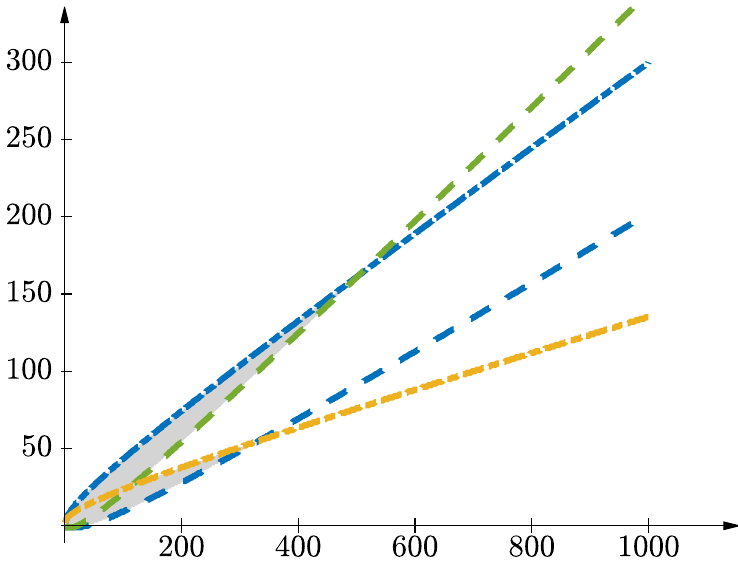}
        \vspace*{-4ex}
        \caption{\centering Overlapping seMCD-buckets $[0,{\color{plotblue} 0.25})$,\linebreak $({\color{plotorange} 0.1},{\color{plotgreen} 0.4})$, $({\color{plotblue} 0.25},1]$.}
    \end{subfigure}
    \caption{Stopping regions (white) for non-overlapping seMCD-buckets (left panel) and overlapping seMCD-buckets (right panel) for the Type A depth functions with indicator kernel. As soon as the seMCD-process $\{S_N\}$ is located in the white area, the algorithm will stop. Dashed lines are lower boundaries: Decision for $p<{\color{plotorange} 0.1}$ (orange), $p<{\color{plotblue} 0.25}$ (blue), $p<{\color{plotgreen} 0.4}$ (green). Dash-dotted lines are upper boundaries: Decision for $p>{\color{plotorange} 0.1}$ (orange), $p>{\color{plotblue} 0.25}$ (blue), $p>{\color{plotgreen} 0.4}$ (green).}
    \label{fig_nonstopping_regions}
    \end{figure}

The algorithm is the same as before, with the only difference that it stops as soon as $\mathfrak{r}_N-\mathfrak{l}_N\le 2$. If $\mathfrak{r}_N-\mathfrak{l}_N= 2$, then one of the original seMCD-buckets is output. If $\mathfrak{r}-\mathfrak{l}= 1$, then the non-empty intersection of two of these buckets is output. The latter can only happen if the additional sample changes the partial sum process so dramatically that it crosses more than one of the boundary sequences with the addition of this new sample, which will rarely happen in practice. We phrase the algorithm for completeness.

\begin{nobreaks}
\vspace{2mm}
\begin{mdframed}
\textbf{\underline{seMCD-algorithm for multiple overlapping seMCD-buckets  as in \eqref{eq_ordering_buckets}}}\\[2mm]
Define $U_N^{(h_0)}=-\infty$ and $L_N^{(h_{k+1})}=\infty$  
for all $N$.\\[1mm]
Continue taking samples until, for the first time $\tau$, it holds $\mathfrak{r}_\tau-\mathfrak{l}_\tau\le 2$, where, for arbitrary $N$,
\begin{align*}
&\mathfrak{l}_N := \max\{0\le j\le k: S_N\ge U_N^{(h_j)}\},\qquad \mathfrak{r}_N := \min\{\mathfrak{l}_N< j \le k+1: S_N\le L_N^{(h_j)}\}.
\end{align*}
Output the seMCD-bucket $(h_{\mathfrak{l}_{\tau}},h_{\mathfrak{r}_{\tau}})$.
\end{mdframed}
\end{nobreaks}

\begin{Remark}
By construction the number of samples  until the algorithm stops is smaller for the overlapping than the non-overlapping case if the same split points are used. In this case, for equal-width seMCD-buckets, the output bucket will usually be wider for the overlapping case. Alternatively, one can use the same buckets as in the non-overlapping case and add overlapping ones. Then, the seMCD-algorithm for the overlapping buckets is still guaranteed to stop before the corresponding non-overlapping one. For equal-width seMCD-buckets, the output bucket will then typically have the same widths in both cases. 
\end{Remark}

    \subsubsection{Statistical properties of the seMCD-algorithm}\label{sec_stat_guar}
    
    We are now ready to explain why the statistical guarantees carry over to the situation with more than two buckets both for the non-overlapping and overlapping case:

\begin{Lemma}\label{lemma_new_FWER}
Let the monotonicity condition \ref{ass_Monotonnicity_boundary_sequences} hold.
Consider a set of split points $h_1, \ldots, h_k$ for the overlapping or non-overlapping seMCD-algorithm, and denote the true (unknown) value by $h$.  Denote by $\sB=\sB(\bm{\xi}_1,\bm{\xi}_2,\ldots)$ the seMCD-bucket that the algorithm outputs. Then,	      
\begin{align*}
&\P_h(h\in \sB)\ge \P_h\left( L_N^{(h)}<S_N<U_N^{(h)} \text { for all }N \in\mathbb{N}\right).
\end{align*}
\end{Lemma}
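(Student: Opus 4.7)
The strategy is to reduce the lemma to the pointwise set inclusion
\begin{align*}
\{L_N^{(h)} < S_N < U_N^{(h)} \text{ for all } N \in \mathbb{N}\} \subseteq \{h\in\sB\},
\end{align*}
after which the asserted probability bound follows by monotonicity of $\P_h$. A key benefit of this reduction is that it handles the overlapping and non-overlapping variants uniformly: their stopping rules differ (namely $\mathfrak{r}_\tau-\mathfrak{l}_\tau=1$ versus $\mathfrak{r}_\tau-\mathfrak{l}_\tau\le 2$), but in both variants the reported interval takes the form $(h_{\mathfrak{l}_\tau},h_{\mathfrak{r}_\tau})$, so only this interval needs to be controlled.

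Next, I would fix a sample path in the event on the left-hand side and locate $\mathfrak{l}_N$ relative to $h$. For any index $j\in\{1,\ldots,k\}$ with $h_j\ge h$, Condition~\ref{ass_Monotonnicity_boundary_sequences} yields $U_N^{(h)}\le U_N^{(h_j)}$; combined with the strict upper bound $S_N<U_N^{(h)}$ this gives $S_N<U_N^{(h_j)}$. Thus no such $j$ belongs to $\{0\le j\le k: S_N\ge U_N^{(h_j)}\}$, which forces $h_{\mathfrak{l}_N}<h$ (using $h_0=-\infty$). A symmetric argument, based on $L_N^{(h_j)}\le L_N^{(h)}<S_N$ for every $j$ with $h_j\le h$, yields $h_{\mathfrak{r}_N}>h$ (using $h_{k+1}=\infty$).

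Finally, these two one-sided observations combine to say that at every $N$ the current candidate interval $(h_{\mathfrak{l}_N},h_{\mathfrak{r}_N})$ strictly contains $h$. If the algorithm stops at some random time $\tau$, the output bucket $\sB=(h_{\mathfrak{l}_\tau},h_{\mathfrak{r}_\tau})$ therefore contains $h$; if it never stops, I would adopt the natural convention that $\sB$ equals the full range, and $h\in\sB$ holds trivially. Taking $\P_h$-probabilities then gives the claim.

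The main subtlety, which also explains why the inequalities in \eqref{I}--\eqref{III} are stated strictly, is the borderline case in which $h$ coincides with one of the split points $h_{j_0}$. Here monotonicity alone permits $U_N^{(h_{j_0})}=U_N^{(h)}$, so a merely non-strict bound $S_N\le U_N^{(h)}$ would leave $\mathfrak{l}_N= j_0$ possible and the inclusion could fail. It is precisely the strictness $S_N<U_N^{(h)}$ guaranteed in the event on the left-hand side that excludes $j_0$ from the candidate set defining $\mathfrak{l}_N$ and preserves $h_{\mathfrak{l}_N}<h_{j_0}=h$; the mirror argument applies to $\mathfrak{r}_N$. This is the one place where strictness is genuinely needed, and it is what makes the argument apply uniformly in $h$ across the whole range of interest, split points included.
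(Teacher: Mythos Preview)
Your argument is correct and is essentially the paper's proof phrased in contrapositive form: whereas the paper bounds $\P_h(h\notin\sB)$ by the union of boundary-crossing events and then uses Condition~\ref{ass_Monotonnicity_boundary_sequences} to collapse that union to $\{\exists N:S_N\ge U_N^{(h)}\}\cup\{\exists N:S_N\le L_N^{(h)}\}$, you establish the equivalent set inclusion directly by tracking $\mathfrak{l}_N$ and $\mathfrak{r}_N$. The underlying idea---monotonicity forces $h_{\mathfrak{l}_N}<h<h_{\mathfrak{r}_N}$ whenever $L_N^{(h)}<S_N<U_N^{(h)}$---is identical, and your explicit treatment of the non-stopping case and the split-point borderline is a useful addition.
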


\begin{proof}[Proof of Lemma \ref{lemma_new_FWER}]
	By the monotonicity condition \ref{ass_Monotonnicity_boundary_sequences} it holds, with the notation $[k]=\{1,\ldots,k\}$,
\begin{align*}
    &\P_h(h \in \sB) = 1-\P_h(h \notin \sB)\\ 
    &\geq 1-\P_h\left(\left( \bigcup_{\{\ell \in [k]: h_\ell\geq h\}} \left\{\exists N \in \mathbb N: S_{N}\geq U_{N}^{(h_\ell)} \right\} \right) \cup \left( \bigcup_{\{\ell \in [k]: h_\ell\leq h\}} \left\{\exists N \in \mathbb N: S_{N}\leq L_{N}^{(h_\ell)}  \right\} \right) \right)\\
    &\geq 1-\P_h\left( \left\{\exists N \in \mathbb N:S_{N}\geq U_{N}^{(h)}\right\}  \cup  \left\{\exists N \in \mathbb N: S_{N}\leq L_{N}^{(h)} \right\} \right)\\
    &= \P_h\left( L_{N}^{(h)} < S_{N}< U_{N}^{(h)} \text{ for all } N \in \mathbb N \right). 
\end{align*}
\end{proof}

Lemma \ref{lemma_new_FWER} shows, in particular, that probabilistic statements as in \eqref{I}  with respect to the split points $h_j$, $j=1,\ldots,k$, carry over to the desired statistical guarantees of the seMCD-algorithm as long as the monotonicity condition \ref{ass_Monotonnicity_boundary_sequences} is satisfied. This is the case for the non-parametric boundary sequences discussed in Section~\ref{sec_sequential_construction}.

The following result gives a lower bound for the probability that the algorithm stops in finite time, which we will show to be equal to one for the boundary sequences in Section~\ref{sec_sequential_construction}.

\begin{Lemma}\label{lem_new_stop}
Denote by $\tau=\tau(\bm{\xi}_1,\bm{\xi}_2,\ldots)$ the stopping time of the seMCD-algorithm, i.e.\ the number of steps until the algorithm terminates and outputs a seMCD-bucket.
Let $\wideunderbar{h}\le h\le \widebar{h}$ and $(\wideunderbar{h},\widebar{h})$ be one of the seMCD-buckets,  then
\begin{align*}
	\P_h(\tau<\infty)&\ge \P_h\left(\liminf_{N\to\infty}(S_N-U_N^{(\wideunderbar{h})})>0\right)+\P_h\left(\liminf_{N\to\infty}(L_N^{(\widebar{h})}-S_N)>0\right)-1.
\end{align*}
\end{Lemma}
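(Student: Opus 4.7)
The strategy is to prove the containment $A_1\cap A_2\subseteq\{\tau<\infty\}$, where
$A_1:=\{\liminf_{N\to\infty}(S_N-U_N^{(\wideunderbar{h})})>0\}$ and
$A_2:=\{\liminf_{N\to\infty}(L_N^{(\widebar{h})}-S_N)>0\}$,
and then to apply Bonferroni's inequality to obtain $\P_h(A_1\cap A_2)\ge\P_h(A_1)+\P_h(A_2)-1$. The conventions $U_N^{(-\infty)}=-\infty$ and $L_N^{(\infty)}=\infty$ render $A_1$ (resp.\ $A_2$) the whole sample space in the edge cases $\wideunderbar{h}=-\infty$ (resp.\ $\widebar{h}=\infty$), so no special treatment of those extreme buckets is needed.

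On $A_1\cap A_2$, the positivity of both liminfs produces a random threshold $N_0$ such that $S_N>U_N^{(\wideunderbar{h})}$ and $S_N<L_N^{(\widebar{h})}$ hold simultaneously for all $N\ge N_0$. Write $\wideunderbar{h}=h_{\mathfrak{l}^*}$ and $\widebar{h}=h_{\mathfrak{r}^*}$ with the conventions $h_0:=-\infty$, $h_{k+1}:=\infty$; the hypothesis that $(\wideunderbar{h},\widebar{h})$ is a seMCD-bucket forces $\mathfrak{r}^*-\mathfrak{l}^*$ to equal $1$ in the non-overlapping case and $2$ in the overlapping case. The inequality $S_N\ge U_N^{(h_{\mathfrak{l}^*})}$ places $\mathfrak{l}^*$ in the set defining $\mathfrak{l}_N$, hence $\mathfrak{l}_N\ge\mathfrak{l}^*$. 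The central step is to rule out $\mathfrak{l}_N\ge\mathfrak{r}^*$: otherwise, combining the monotonicity condition \ref{ass_Monotonnicity_boundary_sequences} with the built-in relation $L_N^{(h)}\le U_N^{(h)}$ would yield
$S_N\ge U_N^{(h_{\mathfrak{l}_N})}\ge U_N^{(h_{\mathfrak{r}^*})}\ge L_N^{(h_{\mathfrak{r}^*})}=L_N^{(\widebar{h})}$,
contradicting $S_N<L_N^{(\widebar{h})}$. Consequently $\mathfrak{l}_N<\mathfrak{r}^*$, and since $S_N\le L_N^{(h_{\mathfrak{r}^*})}$, the index $\mathfrak{r}^*$ lies in $\{\mathfrak{l}_N<j\le k+1:S_N\le L_N^{(h_j)}\}$, forcing $\mathfrak{r}_N\le\mathfrak{r}^*$.

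Combining both bounds gives $\mathfrak{r}_N-\mathfrak{l}_N\le\mathfrak{r}^*-\mathfrak{l}^*$, which meets the stopping rule in both regimes, so $\tau\le N<\infty$ on $A_1\cap A_2$. Bonferroni closes the argument: $\P_h(\tau<\infty)\ge\P_h(A_1\cap A_2)\ge\P_h(A_1)+\P_h(A_2)-1$. The main technical subtlety is ruling out the overshoot $\mathfrak{l}_N\ge\mathfrak{r}^*$; this step crucially uses the construction assumption $L_N^{(h)}\le U_N^{(h)}$ built into the boundary sequences, while the remainder is a direct unpacking of the definitions of $\mathfrak{l}_N$ and $\mathfrak{r}_N$.
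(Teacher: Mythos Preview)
Your proof is correct and follows the same approach as the paper: show the algorithm terminates on $A_1\cap A_2$, then apply the Bonferroni bound $\P_h(A_1\cap A_2)\ge\P_h(A_1)+\P_h(A_2)-1$. The only difference is that where the paper asserts stopping ``by construction'', you supply an explicit verification via the definitions of $\mathfrak{l}_N,\mathfrak{r}_N$, invoking the monotonicity condition and the standing relation $L_N^{(h)}\le U_N^{(h)}$ to rule out the overshoot $\mathfrak{l}_N\ge\mathfrak{r}^*$.
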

In particular, if the two probabilities on the right-hand side are equal to one, then the stopping time is almost surely finite.

\begin{proof}
By construction, the algorithm will be guaranteed to stop if the seMCD-process is above $U_N^{(\wideunderbar{h})}$ and at the same time below $L_N^{(\widebar{h})}$. If $\liminf_{N\to\infty}(S_N-U_N^{(\wideunderbar{h})})>0$, then $S_N $ will be above $U_N^{(\wideunderbar{h})}$ for all $N$ large enough. Similarly, if $\liminf_{N\to\infty}(L_N^{(\widebar{h})}-S_N)>0$, the seMCD process $S_N$ will  be below $L_N^{(\widebar{h})}$ for all $N$ large enough. Consequently,
    \begin{align*}
	\P_h(\tau<\infty)&\ge \P_h\left(\liminf_{N\to\infty}(S_N-U_N^{(\wideunderbar{h})})>0,\;\liminf_{N\to\infty}(L_N^{(\widebar{h})}-S_N)>0\right).
\end{align*}
By $P(A\cap B)= P(A)+P(B)-P(A\cup B)\ge P(A)+P(B)-1$, the assertion follows.
\end{proof}

\begin{Remark}\label{rem_simplicial}
   Despite this stopping guarantee   
   the methodology does not overcome the curse of dimensionality associated with some depth functions, such as the simplicial depth. Indeed, the simplicial depth, computed with respect to absolutely continuous distributions, that are angularly symmetric about the origin, contracts to 0 with rate $O(1/2^d)$, when the dimension $d$ grows \citep[Theorem 4]{LiuSimplicial}. In particular, for $d\ge 5$, all simplicial depth values are included in a (hypothetical) seMCD-depth bucket covering $[0,0.1]$. So, while such a bucket is still meaningful for $d=1$, this is no longer the case for $d\ge 1$. \\ 
   To guarantee a similar precision for each dimension $d$, the split points need to be scaled by the factor $1/2^{d-1}$. However, with such rescaled buckets more seMCD-samples are required in higher dimensions. In fact, some preliminary simulations 
   with $d=5$ and overlapping buckets with non-rescaled split points of $0.05,0.1,0.15,\ldots, 0.6$ revealed that the samples required until the algorithm stops were of the order of magnitude of $10^3/2$ larger for the rescaled buckets than for the non-rescaled ones.
\end{Remark}

\subsubsection{Greedy seMCD-algorithm}\label{sec_another_look_algorithm}
In this section, we introduce an alternative algorithm, that has both advantages and disadvantages over the previously described approach.
 Here, we keep track of the  feasible output seMCD-buckets by tracking the largest feasible lower index $\mathfrak{l}$ as well as the smallest feasible upper index $\mathfrak{r}$, with the meaning that all buckets contained in $(h_{\mathfrak{l}},h_{\mathfrak{r}})$ are still feasible output buckets. The algorithm terminates as soon as only one bucket, or the non-empty intersection of two buckets, is contained in this interval and outputs this interval. 
We call the algorithm greedy as, due to this tracking, the number of checks necessary to update $\mathfrak{r}$ and $\mathfrak{l}$ after an additional sample is taken is smaller than in the previous algorithms. Indeed, the maximum/minimum are only taken over still feasible indices $\mathfrak{l}<j<\mathfrak{r}$ in the greedy algorithm, while they are taken over all $j$ in the previous algorithms.

\vspace{2mm}
\begin{mdframed}
\textbf{\underline{Greedy seMCD-algorithm}}

\renewcommand{\theenumi}{\arabic{enumi}}
\renewcommand{\labelenumi}{\textbf{Step} \textbf{\theenumi}:}
\begin{enumerate}[align=left,leftmargin=*,widest=10]
	\item Set the lower feasible index $\mathfrak{l}=0$ and the upper feasible index to $\mathfrak{r}=k+1$.
	\item Take an additional sample and check in the following order
		\begin{itemize}
			\item[1.] if, for some $\mathfrak{l}< j <\mathfrak{r}$, it holds $S_N\ge U_N^{(h_j)}$:\\[2mm] Update $\mathfrak{l} := \max\{\mathfrak{l}< j <\mathfrak{r}: S_N\ge U_N^{(h_j)}\}$. \\[2mm]
            If $\mathfrak{r}-\mathfrak{l}=1$ in the non-overlapping case, resp.\ $\mathfrak{r}-\mathfrak{l}\le 2$ in the overlapping case,\\ go to Step~3.\\[1mm]
			\item[2.] if, for some $\mathfrak{l}< j <\mathfrak{r}$, it holds $S_N\le L_N^{(h_j)}$:\\[2mm] Update $\mathfrak{r} := \min\{\mathfrak{l}< j <\mathfrak{r}: S_N\le L_N^{(h_j)}\}$.\\[2mm] If $\mathfrak{r}-\mathfrak{l}=1$ in the non-overlapping case, resp.\ $\mathfrak{r}-\mathfrak{l}\le 2$ in the overlapping case,\\ go to Step~3.
            \end{itemize}
			\item Output the seMCD-bucket $(h_{\mathfrak{l}},h_{\mathfrak{r}})$.
		\end{enumerate}
\end{mdframed}

The greedy algorithm produces a nested sequence of intervals $(h_{\mathfrak{l}},h_{\mathfrak{r}})$ which contain the true value uniformly with a probability of at least $1-\alpha$.   It stops as soon as this interval equals one of the prespecified seMCD-buckets or their non-empty intersection. 

The greedy algorithm stops earlier than the previous algorithm: Indeed, it can already stop within the gray area as in Figure~\ref{fig_nonstopping_regions}, for example, if the seMCD-process in (a) first crosses above the upper dash-dotted orange line and then -- without leaving the gray area -- below the lower dashed blue line.  In fact, if $\wideunderbar{h}< h<\widebar{h}$ for the true value $h$ and $(\wideunderbar{h},\widebar{h})$ is one of the seMCD-buckets, then the greedy algorithm stops as soon as the seMCD-algorithm has at least once been above the upper boundary sequence associated with $\wideunderbar{h}$ and below the lower boundary sequence associated with $\widebar{h}$. Consequently, the lower bound in Lemma~\ref{lem_new_stop} can be replaced by:
\begin{align*}
	\P_h(\tau_{\operatorname{greedy}}<\infty)&\ge 
	 \P_h\left(S_{\wideunderbar{N}}\ge U_{\wideunderbar{N}}^{(\wideunderbar{h})}\text{ for some }\wideunderbar{N}\ge 1\right)+\P_h\left(S_{\widebar{N}}\le L_{\widebar{N}}^{(\widebar{h})}\text{ for some }\widebar{N}\ge 1\right)-1.
\end{align*}
This comes at the cost of a higher probability of outputting the wrong bucket: For example, in Figure~\ref{fig_nonstopping_regions} the seMCD-process with true value in $(0.1,0.25)$ could first falsely cross the upper dashed-dotted blue while remaining between the green lines, then fall below the lower dashed green line to falsely output the bucket $(0.25,0.4)$ before entering the white area between the blue and orange lines, where the original algorithm would correctly output the bucket $(0.1,0.25)$. While this happens with low probability and does not affect the algorithms statistical guarantees, it is most likely to occur at the beginning, where each new sample has a larger impact on the seMCD-process  and the curves are still close together, forming a larger gray area. In other words, the original algorithm is more conservative, especially at the beginning of the sampling. This is a desirable property -- in particular in combination with the non-parametric asymptotic boundary sequences as discussed in Section~\ref{sec_sequential_construction}. However, the assertion and proof of Lemma~\ref{lemma_new_FWER} and thus the corresponding statistical guarantee also hold for the greedy algorithm.

\subsection{SeMCD-buckets for monotonic E-depths}\label{sec_transformation}
In Section~\ref{type_b} monotonic E-depths were introduced which are of the form
\begin{align*}
   g:= F(h):=F(\E(H(\bm{\xi})))
\end{align*}
for a strictly monotonic function $F$. In the following, we will consider $F$ to be strictly increasing. The less common strictly decreasing case is analogous. The seMCD-methodology based on such monotonic E-depths is applicable for the calculation of depth buckets as in Section~\ref{section_depth_level_sets} or anomaly detection as in Section~\ref{section_outlier} by choosing non-overlapping depth buckets 
\begin{align*}
(-\infty,g_1),(g_1,g_2),\ldots,(g_l,\infty), \qquad  \text{with } \quad g_1<g_2<\ldots<g_l\quad \text{ and }\quad g_j=F(h_j).
\end{align*}
For the overlapping case, one can similarly use $   (-\infty,g_2), (g_1,g_3), \ldots, (g_{l-2},g_l), (g_{l-1},\infty).$
This is because, for $F$ strictly increasing and any $\ell<u$,
\begin{align*}
   h_{\ell} < \E (H(\bm{\xi})) < h_{u}\quad \iff \quad g_{\ell} <F(\E (H(\bm{\xi})))<g_u.
\end{align*}
Consequently, one can run the usually seMCD-algorithm with the seMCD-buckets based on $h_j$, $j=1,\ldots,l$,  for $\E (H(\bm{\xi}))$ and then  output the corresponding bucket based on $g_j=F(h_j)$, $j=1,\ldots,l$, for $F(\E (H(\bm{\xi})))$ without losing the statistical guarantees.

For binary maximum depth classification in combination with E-depths we have proposed to use the seMCD-algorithm directly on the difference of the two empirical depths as in \eqref{eq_classification_difference}. For the monotonic E-depths, this difference reads with the same notation as in Section~\ref{subsec_classification}
\begin{align}\label{eq_diff_modified_E_class}
    F(\E[ H_{\bx_1,\ldots,\bx_{r_x}}(\bm \zeta)])-F(\E[H_{\by_1,\ldots,\by_{r_y}}(\bm \zeta)]).
\end{align}
In combination with the non-overlapping seMCD-buckets $(-\infty,0),(0,\infty)$, the methodology is applicable by using the same seMCD-buckets on $\E[ H_{\bx_1,\ldots,\bx_{r_x}}(\bm \zeta)-H_{\by_1,\ldots,\by_{r_y}}(\bm \zeta)]$ because this expected value is negative (positive) iff the value in \eqref{eq_diff_modified_E_class} is negative (positive). Thus, the statistical guarantee remains. However, because of the non-linearity of $F$, using an additional overlapping block $(-\epsilon_F,\epsilon_F)$ for the quantity in \eqref{eq_diff_modified_E_class} requires of an additional assumption. Let us assume that $F$ is  Lipschitz-continuous with Lipschitz constant $L_F>0$. Then, using the seMCD-algorithm with the third bucket of $(-\epsilon_F/L_F,\epsilon_F/L_F)$ on $\E[ H_{\bx_1,\ldots,\bx_{r_x}}(\bm \zeta)-H_{\by_1,\ldots,\by_{r_y}}(\bm \zeta)]$ gives the desired statistical guarantee for the quantity \eqref{eq_diff_modified_E_class} with the buckets $(-\infty,0),(-\epsilon_F,\epsilon_F),(0,\infty)$. This is because
\begin{align*}
  \left|\E[ H_{\bx_1,\ldots,\bx_{r_x}}(\bm \zeta)-H_{\by_1,\ldots,\by_{r_y}}(\bm \zeta)]\right|\le \frac{\epsilon_F}{L_F}\
     \implies  
  \left |F( \E[ H_{\bx_1,\ldots,\bx_{r_x}}(\bm \zeta)])-F(\E[H_{\by_1,\ldots,\by_{r_y}}(\bm \zeta)])\right|\le \epsilon_F.
\end{align*}
Note that, depending on the true underlying depth values (not only the difference), this might be a very conservative procedure in terms of the true coverage for the $(-\epsilon_F,\epsilon_F)$-bucket.

\section{Construction of the boundary sequences}\label{sec_boundary}
By Lemma~\ref{lemma_new_FWER} the algorithms described in the previous section require lower and upper boundary sequences that fulfil \eqref{I}, in order to keep their statistical guarantees. In constructing such boundary sequences, we will make use of the following connection to sequential testing: In the latter, the sample size is not fixed in advance. Instead, data collection continues until a stopping criterion is reached. As with classical testing, the probability of a type-I-error is controlled at a fixed level $\alpha$, that is, if the null hypothesis is true, the procedure will stop with a probability of at most $\alpha$. Typically, for a sequence of iid\ random variables $Y_1,Y_2,\ldots$, with $h=\E (Y_1)$ and a null hypothesis of $h=h_1$, the test decision is based on the partial sum process $S_N=\sum_{i=1}^NY_i$ in addition to the appropriate boundary sequences $L_N^{(h_1)}$ and $U_N^{(h_1)}$. The sampling stops as soon as $S_N \not\in (L_N^{(h_1)},U_N^{(h_1)})$. Clearly, the outlined sequential test with $Y_i=H(\bm{\xi}_i)$ controls the type-I-error at level $\alpha$ iff \eqref{I} holds.
    
If the distribution of the seMCD process $\{S_N\}$ depends only on the unknown quantity of interest $h=\E (H(\bm{\xi}))$, but is otherwise known, then the probabilities in \eqref{I} are in principle known, permitting the construction of such boundary sequences. However, other scenarios are possible: It might happen that the distribution of the seMCD-process is known but not analytically accessible, that it depends additionally on other unknown quantities, or that it is indeed not known at all. To elaborate, in the context of depth calculation as discussed in this paper, we encounter the following cases:
\renewcommand{\theenumi}{\arabic{enumi}}
\renewcommand{\labelenumi}{\textbf{Case} \textbf{\theenumi}:}
\begin{enumerate}[wide]
	\item \label{Case1}The distribution of $H(\bm{\xi})$ is fully known up to the unknown expectation $h$, with all the probabilities in \eqref{I} being easily computable. In the context of sequential testing, this corresponds to a two-sided simple hypothesis in a fully parametric situation.  As an example, consider Type A depths with indicator kernels such as the simplicial depth, see Appendix~\ref{subsec_multivariate_depths_A}, where $H(\bm{\xi})\sim \textrm{Bernoulli}(h)$ with unknown success probability $h$.  In Section~\ref{sec_gandy_approach} we describe how to obtain boundary sequences in this Bernoulli case that fulfil \eqref{I} -- \eqref{III}.
	\item \label{Case2} Similarly to Case~\ref{Case1}, the  distribution of $H(\bm{\xi})$ is theoretically known, but an exact analytical evaluation is difficult or impossible. Examples are given by the integrated depths, see \eqref{DI}, where the expectation is with respect to a distribution that is not related to the one with respect to which the depth is computed. Without access to this distribution it is practically impossible to construct exact boundary sequences fulfilling \eqref{I}, see Remark~\ref{rem_generalization}. However, the asymptotic approach as outlined in Section~\ref{sec_sequential_construction} is applicable.
	\item \label{Case3} The distribution of $H(\bm{\xi})$ is known, but depends not only on the unknown parameter $h$ but also on other unknown quantities. As an example, consider binary maximum depth classification in combination with a Type A depth with indicator kernel as in \eqref{eq_classification_difference}. The distribution of the corresponding $H(\bm{\xi})$ is a three-point distribution on $\{-1,0,1\}$ depending on the two unknown empirical depths $d_{\bx_1,\ldots,\bx_{r_{\bx}}}$ and $ d_{\by_1,\ldots,\by_{r_{\by}}}$ with respect to each of the two samples and not only on $h=d_{\bx_1,\ldots,\bx_{r_{\bx}}}-d_{\by_1,\ldots,\by_{r_{\by}}}$. In sequential testing, this corresponds to a test situation with a composite rather than a simple hypothesis $d_{\bx_1,\ldots,\bx_{r_{\bx}}}=d_{\by_1,\ldots,\by_{r_{\by}}}$. As for Case~\ref{Case2} this makes constructing exact boundary sequences challenging, see Remark~\ref{rem_generalization}. However, the asymptotic approach of Section~\ref{sec_sequential_construction} is applicable.
	\item\label{Case4} The distribution of $H(\bm{\xi})$ is not known, and there is no reasonable choice of a parametric statistical model for it, e.g.\ because the distribution depends on the entire unknown distribution of the observations and not only on the unknown parameter $h$. An example is  the modified band depth, see Appendix~\ref{subsec_functional_depth_A}. This corresponds to a non-parametric testing situation with no knowledge of the underlying distributional family. Without access to the probabilities in \eqref{I}, it is impossible to obtain exact boundary sequences based on $S_N$.  However, as is often the case in non-parametric statistics, the asymptotic approach described in Section \ref{sec_sequential_construction} is applicable. 
\end{enumerate}
 
\subsection{Boundary sequences for Type A depth functions with  indicator kernel}\label{sec_gandy_approach}
In this section, we first discuss the construction of boundary sequences if $H(\bm{\xi})\sim \mbox{Bernoulli}(h)$, with $0<h<1$, followed by the statistical properties of the corresponding seMCD-algorithm. Several sequential tests have been developed in this context dating from the 1970s \citep{Robbins1970,Lai1976} to today \cite{Fischer2024B}. In this article we follow the approach suggested by \cite{Gandy2009}. The latter paper was the basis for the introduction of $p$-value buckets discussed in \cite{Gandy2020b}, which inspired this paper.

\subsubsection{Recursive computation of the boundary sequences}\label{rec_comp_seq}
The boundary sequences in this section are based on an error spending sequence $(\alpha_N)_{N \in \mathbb N}$, where the construction guarantees that $\alpha_N$ is an upper bound for the probability of making a wrong decision within the first $N$ sampling steps. Thus, $(\alpha_N)_{N \in \mathbb N}$ has to be nondecreasing with values in $[0,\alpha]$ and $\alpha_N \to \alpha$ as $N \to \infty$, see e.g.\ \cite{LanDeMets1983} for details.  The spending sequence determines the shape of the boundary sequences. As an example, consider the class of spending sequences $\alpha_N = \alpha \cdot N/(\kappa+N)$, where the parameter $\kappa \in \mathbb N$ denotes the number of iterations, after which half of the error is spent, i.e.\ $\alpha_{\kappa} = \alpha/2$.  

For the Bernoulli case, clearly $S_N$ only takes integer values, and it holds $0\le S_{N+1}-S_N\le 1$ for all $N$. Similarly, our boundary sequences will be constructed to take only integer values and such that $0\le L_{N+1}-L_N\le 1$ as well as $0\le U_{N+1}-U_N\le 1$ for all $N$, where we suppress the upper index $(h)$ of the boundary sequences for notational ease in the description of the algorithm. Thus, at the time of stopping, the seMCD-process will be equal to one of the boundary sequences. Consequently, for the Bernoulli case the inequalities in the conditions for the calculation of $\mathfrak{l}_N, \mathfrak{r}_N$ in the seMCD-algorithm can be replaced by equalities.

We will now explain how to recursively compute boundary sequences for a given error spending sequence and split point $h$:
With an initialisation of $S_0=0, L_0=-1, U_0=1$, define 
\begin{align*}
	&\mathfrak p_{N}(k) := \P_h(S_{N}=k, L_j < S_j < U_j \text{ for all }j< N),
\end{align*}
i.e.\ the probability of $S_N=k$, $k\in \mathbb{N}$, and not having previously hit a boundary sequence.  For $N=0$, clearly,  $\mathfrak p_0(0)=1$ and the following recursions hold:
\begin{align*}
&\mathfrak p_{N+1}(U_N)=\mathfrak p_{N}(U_N-1)\cdot h,\qquad
\mathfrak p_{N+1}(L_N+1)=\mathfrak p_{N}(L_N+1)\cdot (1- h),
\\
&\mathfrak p_{N+1}(k)=\mathfrak p_{N}(k)\cdot (1-h)+\mathfrak p_{N}(k-1)\cdot h \quad \text{for }L_N+1< k < U_N.
\end{align*}
In addition, define the probability of making a wrong decision within the first $N$ sampling steps by first hitting the \textbf{upper} boundary sequence, i.e.\
 \[ 
\beta_U{(N)} := \P_h(\text{For some } \ell \le N: S_\ell = U_\ell \text{ and }  L_j<S_j<U_j \text{ for all }j< \ell),
\]
as well as the corresponding probability for first hitting the \textbf{lower} boundary sequence by 
\[ 
\beta_L{(N)} := \P_h(\text{For some } \ell \le N: S_\ell = L_\ell \text{ and } L_j< S_j< U_j \text{ for all }j< \ell).
\]
Clearly, the following recursions hold:
\begin{align}\label{eq_rec_betas}
    \beta_U{(N+1)} = \beta_U{(N)} + \mathfrak p_{N+1}(U_{N+1}), \qquad \beta_L{(N+1)} = \beta_L{(N)} + \mathfrak p_{N+1}(L_{N+1}).
\end{align}

The below algorithm constructs the boundary sequences in such a way that they fulfil 
\begin{align}\label{eq_guar_betas}
\beta_U{(\ell)} \leq \frac{\alpha_{\ell}}{2} \quad \text{and}\quad \beta_L{(\ell)} \leq \frac{\alpha_{\ell}}{2} 
\end{align}
for all $\ell\in\mathbb{N}$. The three quantities $\mathfrak p_{N+1}(k)$, $\beta_U(N)$ as well as $\beta_L(N)$ are only based on the boundary sequences for $j\le N$. If $\beta_U(N)$ and $\beta_L(N)$ fulfil \eqref{eq_guar_betas} for $\ell=N$ (induction hypothesis), then, based only on these three quantities, the algorithm chooses the boundary sequences at the next point $N+1$ in such a way, that   \eqref{eq_guar_betas} also holds for $\ell=N+1$ (induction step). Consequently, \eqref{eq_guar_betas} holds for all $\ell\in\mathbb{N}$ by induction, as clearly, for $N=0$,  $\beta_U{(0)}=0\le \alpha_0/2$ and $\beta_L{(0)}=0\le \alpha_0/2$.

Let us explain the construction of $U_{N+1}$ in detail: Because the error spending sequence is nondecreasing and as \eqref{eq_guar_betas} is guaranteed for $\ell=N$, by induction hypothesis, it holds $\alpha_{N+1}/2 \ge \beta_U{(N)}$. If $\mathfrak p_{N+1}(U_N) \le \alpha_{N+1}/2 - \beta_U{(N)}$,  then for $U_{N+1}:=U_N$  it indeed holds by \eqref{eq_rec_betas}
\begin{align*}
    \beta_U{(N+1)} = \beta_U{(N)} + \mathfrak p_{N+1}(U_{N})\le \alpha_{N+1}/2.
\end{align*}
Otherwise, choosing $U_{N+1}:=U_N+1$ yields $\beta_U{(N+1)}=\beta_U{(N)}$, such that the first assertion in \eqref{eq_guar_betas} holds by monotonicity  of the error spending sequence. 
The construction of $L_{N+1}$ is similar, see the below algorithm for details,  and analogously yield the second assertion in \eqref{eq_guar_betas}. 

We summarize the above findings in the following algorithm:

\vspace{2mm}
\begin{mdframed}
\textbf{\underline{Algorithm to compute the boundary sequences}}
\vspace{2mm}

\textbf{Initialise:} $L_0=-1$, $U_0=1$, $p_0(0)=1$, $\beta_U(0)=0=\beta_L(0)$.\\[2mm]
\textbf{Recursively, compute} the values for $N+1$ from those of $N$ as follows:
\begin{itemize}
\item Compute $\mathfrak p_{N+1}(U_N)=\mathfrak p_{N}(U_N-1)\cdot h$.
\begin{align*}
            &\textbf{If }\mathfrak p_{N+1}(U_N)\leq \frac{\alpha_{N+1}}2 - \beta_U(N): \qquad \text{Set } U_{N+1} = U_N.\\[1mm]             &\textbf{Otherwise:}\qquad \text{Set }U_{N+1} = U_N+1.
        \end{align*}
\item Compute $\mathfrak p_{N+1}(L_N+1)=\mathfrak p_{N}(L_N+1)\cdot (1-h)$.
\begin{align*}
            &\textbf{If }\mathfrak p_{N+1}(L_N+1)\leq \frac{\alpha_{N+1}}2 - \beta_L(N): \qquad \text{Set } L_{N+1} = L_N+1.\\[1mm]             &\textbf{Otherwise:}\qquad \text{Set }L_{N+1} = L_N.
        \end{align*}
\item Compute $\mathfrak p_{N+1}(k) = \mathfrak p_{N}(k) \cdot (1-h) + \mathfrak p_{N}(k-1) \cdot h$ for $L_N+1< k< U_N$.
\end{itemize}
\end{mdframed}

\begin{Remark}\label{rem_generalization}
When $H(\bm{\xi})$ does not follow a Bernoulli-distribution several obstacles arise for a generalisation of the above algorithm, even if the distribution class is known as in Case~\ref{Case1}: If $H(\bm{\xi})$ is no longer a.s.\ non-negative, $S_N$ no longer increases a.s., greatly complicating things. As soon as the underlying probability distribution is continuous, keeping track of meaningful alternatives of $\mathfrak p_N(\cdot)$ becomes difficult. If the probabilities $\mathfrak p_N(\cdot)$ cannot be calculated analytically as in Case \ref{Case2}, the difficulties are obvious. When there exist multiple parameters leading to the same $h=\E (H(\bm{\xi}))$, as in Case \ref{Case3}, $\mathfrak p_N(\cdot)$ will typically depend on those underlying parameters and needs to be tracked for all of these usually infinitely many parameters,  which can be challenging. Finally, in Case \ref{Case4} there is no possibility of calculating $\mathfrak p_N(\cdot)$ (nor a suitable alternative in the continuous case) rendering such an approach impossible.
\end{Remark}

\subsubsection{Statistical Guarantees}
By the construction detailed in the previous Section \ref{rec_comp_seq}, inductively, \eqref{eq_guar_betas} holds for all $\ell\in\mathbb{N}$, such that for the true value $h$, the probability of $\{S_j\}$ having hit one of the boundary sequences up to sample $N$ is given by $\beta_U{(N)}+ \beta_L{(N)} \leq \alpha_{N}$ for all $N \in \mathbb N$. Furthermore, reintroducing the dependence on  $h$ again, by monotonicity of the probabilities with respect to $h$, it holds for any $h_-\le h_+$ and all $N$, that 
\begin{align*}
    \beta_U^{(h_-)}(N)\le \beta_U^{(h_+)}(N)\quad \text{and}\quad \beta_L^{(h_-)}(N)\ge \beta_L^{(h_+)}(N).
\end{align*}
In summary, we have proven the following lemma using again the upper indices in the boundary sequences for clarity:

\begin{Lemma}\label{lem_bernoulli-guarantee}
In the above situation with boundary sequences  constructed with the above algorithm, \eqref{I} holds as well as
\[
    \P_h \left(U^{(h)}_{\ell} < S_{\ell}< L^{(h)}_{\ell}\text{ for all }\ell\in \{1,\ldots, N\}\right) \geq 1-\alpha_{N}\quad \text{for all }N \in \mathbb N. 
\] 
Defining the stopping time $\tau^{(h)} = \inf\{N \in \mathbb N: S_N \geq U_N^{(h)} \text{ or } S_N \leq L_N^{(h)} \}$ it holds for any $h_-\le h\le h_+$
\begin{align}\label{eq_stopping_probability}
	&\P_{h_-}\left(\tau^{(h)} < \infty,S_{\tau^{(h)}}\geq U_{\tau^{(h)}}^{(h)}  \right)  \leq \frac{\alpha}{2} \quad\text{and}\quad \P_{h_+}\left(\tau^{(h)} < \infty,S_{\tau^{(h)}}\leq L_{\tau ^{(h)} }^{(h)}\right)\leq  \frac{\alpha}{2}.
\end{align}
\end{Lemma}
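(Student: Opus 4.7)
The plan is to assemble all three assertions from a single invariant, namely $\beta_U^{(h)}(N) \leq \alpha_N/2$ and $\beta_L^{(h)}(N) \leq \alpha_N/2$ for every $N$. This invariant has already been spelled out in the text preceding the lemma: the base case $\beta_U^{(h)}(0) = \beta_L^{(h)}(0) = 0$ is immediate, and the case distinction in the algorithm together with the recursions \eqref{eq_rec_betas} is designed precisely so that the inductive step preserves \eqref{eq_guar_betas} on both sides. I would therefore quote \eqref{eq_guar_betas} as the starting point rather than redo that induction.

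With \eqref{eq_guar_betas} in hand, the finite-$N$ statement follows from disjointness: the events of first hitting the upper, resp.\ lower, boundary on or before step $N$ are mutually exclusive and carry probabilities $\beta_U^{(h)}(N)$ and $\beta_L^{(h)}(N)$, so their union has probability at most $\alpha_N$, and complementation gives the bound $\geq 1-\alpha_N$. The assertion \eqref{I} then drops out by continuity from below: the events $A_N = \{\exists\,\ell \leq N : S_\ell \notin (L_\ell^{(h)}, U_\ell^{(h)})\}$ are nested increasing, so $\P_h(\bigcup_N A_N) = \lim_N \P_h(A_N) \leq \lim_N \alpha_N = \alpha$, and the complement of $\bigcup_N A_N$ is exactly the event in \eqref{I}.

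For the bounds on $\tau^{(h)}$ under $\P_{h_-}$ and $\P_{h_+}$ the key ingredient is that, with the boundary sequences $U^{(h)}, L^{(h)}$ held fixed, the first-hit-upper probability is nondecreasing in the Bernoulli parameter. I would prove this by the standard quantile coupling: draw $V_1, V_2, \ldots$ iid uniform on $[0,1]$ and set $\xi_i^{(p)} = \mathds{1}\{V_i \leq p\}$, so that $S_N^{(p)} := \sum_{i \leq N} \xi_i^{(p)}$ is pathwise nondecreasing in $p$. If $S^{(h_-)}$ first hits the upper boundary at some finite $\tau$, then for each $j < \tau$ we have $S^{(h)}_j \geq S^{(h_-)}_j > L_j^{(h)}$, so $S^{(h)}$ has not yet touched the lower boundary, while $S^{(h)}_\tau \geq S^{(h_-)}_\tau = U_\tau^{(h)}$, so $S^{(h)}$'s first boundary hit is the upper one at some time $\tau' \leq \tau$. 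Hence $\{S^{(h_-)} \text{ first hits upper}\} \subseteq \{S^{(h)} \text{ first hits upper}\}$, which yields $\P_{h_-}(\tau^{(h)} < \infty, S_{\tau^{(h)}} \geq U_{\tau^{(h)}}^{(h)}) \leq \lim_N \beta_U^{(h)}(N) \leq \alpha/2$; the bound for $h_+$ on the lower boundary is symmetric.

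The main obstacle is the bookkeeping in this last coupling step: one has to ensure the inclusion concerns \emph{first}-hitting the upper boundary, not merely ever reaching it. The lower-boundary comparison above rules out that $S^{(h)}$ touches $L^{(h)}$ before time $\tau$, which is exactly what makes the first-hitting inclusion go through. All remaining steps are routine consequences of \eqref{eq_guar_betas} and the monotonicity of $\alpha_N$.
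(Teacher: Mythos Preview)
Your proposal is correct and follows the same line as the paper: the text preceding the lemma already records \eqref{eq_guar_betas} by induction, obtains the finite-$N$ bound from $\beta_U^{(h)}(N)+\beta_L^{(h)}(N)\le\alpha_N$, and then invokes ``monotonicity of the probabilities with respect to $h$'' for \eqref{eq_stopping_probability}. Your quantile-coupling argument simply makes that last monotonicity step explicit, which is welcome detail but not a different approach.
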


If \eqref{I} as well as the monotonicity condition \ref{ass_Monotonnicity_boundary_sequences} hold, then Lemma~\ref{lemma_new_FWER} delivers the desired statistical guarantee of making a wrong decision with a probability less than the chosen tolerance parameter $\alpha$. While Lemma~\ref{lem_bernoulli-guarantee} shows that \eqref{I} holds, the monotonicity condition \eqref{eq_Monotonnicity_boundary_sequences} is not necessarily fulfilled. Therefore, here, for completeness, we will repeat arguments of \citet[Theorem 3]{Gandy2020b},  making use of \eqref{eq_stopping_probability}. 

\begin{Corollary}
In the Bernoulli-case with boundary sequences constructed by the above algorithm, which fulfil
\begin{align}\label{MonStrong}        U_N^{(h_1)} \leq U_N^{(h_2)} \leq ... \leq U_N^{(h_{l})} \quad\text{as well as}\quad L_N^{(h_1)} \leq L_N^{(h_2)} \leq ... \leq L_N^{(h_{l})} \tag{\textbf{Mon}$\bm{(h_1, \ldots,h_k)}$}
\end{align}
for all split points $h_1, \ldots,h_k$, with the notation of Lemma~\ref{lemma_new_FWER}, it holds
\begin{align*}
    \P_h(h\in \sB)\ge 1-\alpha.
\end{align*}
\end{Corollary}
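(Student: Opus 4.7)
The strategy is to follow the argument of \citet[Theorem~3]{Gandy2020b}. Fix the true value $h$ and, assuming $h$ is not itself one of the split points, let $j^{*}=\min\{j\in[k]:h_{j}>h\}$ and $j_{*}=\max\{j\in[k]:h_{j}<h\}$ denote the split-point indices bracketing $h$ (with the conventions $h_{0}=-\infty$ and $h_{k+1}=\infty$). Since $\sB=(h_{\mathfrak{l}_{\tau}},h_{\mathfrak{r}_{\tau}})$, the complementary event splits as
\begin{align*}
\{h\notin\sB\}\ \subseteq\ \{\mathfrak{l}_{\tau}\ge j^{*}\}\cup\{\mathfrak{r}_{\tau}\le j_{*}\}.
\end{align*}
By the defining maximum and minimum in $\mathfrak{l}_{N}$ and $\mathfrak{r}_{N}$ together with the monotonicity \eqref{MonStrong}, these two events are contained in $\{S_{\tau}\ge U_{\tau}^{(h_{j^{*}})}\}$ and $\{S_{\tau}\le L_{\tau}^{(h_{j_{*}})}\}$, respectively.

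The plan is then to bound each of these reduced events by $\alpha/2$ using \eqref{eq_stopping_probability}. For the upper side, note that $\{S_{\tau}\ge U_{\tau}^{(h_{j^{*}})}\}$ forces the two-bucket stopping time $\tau^{(h_{j^{*}})}=\inf\{N:S_{N}\ge U_{N}^{(h_{j^{*}})}\text{ or }S_{N}\le L_{N}^{(h_{j^{*}})}\}$ to be finite. Combining the stochastic dominance of Bernoulli partial sums in the success probability (so that the chain under $\P_{h}$ for $h\le h_{j^{*}}$ is coupled below the chain under $\P_{h_{j^{*}}}$) with the explicit integer-valued non-decreasing construction of the boundaries in Section~\ref{rec_comp_seq}, one argues that under $\P_{h}$ the probability of ever lying above $U^{(h_{j^{*}})}$ is controlled by the ``first-crossing is upper'' probability of \eqref{eq_stopping_probability}, yielding $\P_{h}(\mathfrak{l}_{\tau}\ge j^{*})\le\alpha/2$. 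A symmetric argument with $h_{+}=h\ge h_{j_{*}}$ and the lower-boundary half of \eqref{eq_stopping_probability} gives $\P_{h}(\mathfrak{r}_{\tau}\le j_{*})\le\alpha/2$. A union bound then delivers $\P_{h}(h\notin\sB)\le\alpha$.

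The main obstacle is precisely this reduction of the ``ever above $U^{(h_{j^{*}})}$'' event to the first-crossing event controlled by \eqref{eq_stopping_probability}: the seMCD-process could, a priori, first touch the lower $h_{j^{*}}$-boundary and only later rise above the upper one, contributing mass to $\{S_{\tau}\ge U_{\tau}^{(h_{j^{*}})}\}$ not captured by $\{\tau^{(h_{j^{*}})}<\infty,\ S_{\tau^{(h_{j^{*}})}}\ge U_{\tau^{(h_{j^{*}})}}^{(h_{j^{*}})}\}$. Ruling this out relies on the coupling in the success probability together with the recursive error-spending construction of the boundaries, exactly as in \cite{Gandy2020b}; note that for the greedy variant described in Section~\ref{sec_another_look_algorithm} the containment becomes automatic, since once $\mathfrak{l}$ has been incremented it cannot be decremented, so the desired reduction holds pathwise and the bound is transparent.
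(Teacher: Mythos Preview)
Your strategy is the same as the paper's: both follow \citet[Theorem~3]{Gandy2020b}, isolate the bracketing split points $\mathfrak U=h_{j^{*}}$ and $\mathfrak L=h_{j_{*}}$, aim to bound each side by $\alpha/2$ via \eqref{eq_stopping_probability}, and finish with a union bound. The difference is in the reduction. The paper never passes through the seMCD stopping time $\tau$ or the ``ever above $U^{(h_{j^{*}})}$'' event; it works throughout with the two-sided first-exit times $\tau^{(h_j)}$ and shows, purely from \eqref{MonStrong}, the pathwise nesting
\[
\bigl\{\tau^{(h_j)}<\infty,\ S_{\tau^{(h_j)}}\ge U_{\tau^{(h_j)}}^{(h_j)}\bigr\}\ \subseteq\ \bigl\{\tau^{(\mathfrak U)}<\infty,\ S_{\tau^{(\mathfrak U)}}\ge U_{\tau^{(\mathfrak U)}}^{(\mathfrak U)}\bigr\}\qquad\text{for every }h_j\ge\mathfrak U,
\]
so that the whole union over $\{h_j\ge h\}$ collapses to the single $\mathfrak U$-event, to which \eqref{eq_stopping_probability} applies directly (and symmetrically for $\mathfrak L$). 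Your route via $\{S_\tau\ge U_\tau^{(h_{j^{*}})}\}$ introduces exactly the obstacle you flag, and the fix you sketch does not close it: stochastic dominance gives $\P_h(\exists N:S_N\ge U_N^{(h_{j^{*}})})\le\P_{h_{j^{*}}}(\exists N:S_N\ge U_N^{(h_{j^{*}})})$, but the right-hand side is not the ``first exit via top'' probability and is not bounded by $\alpha/2$ from the error-spending construction. Staying with the first-exit formulation, as the paper does, avoids this detour entirely. Two smaller points: you should not exclude the case where $h$ is itself a split point (then $\mathfrak U=\mathfrak L=h$ and the two $\alpha/2$ bounds combine to control $\P_h(\tau^{(h)}<\infty)$), and you should also cover the edge cases where one of $\mathfrak U,\mathfrak L$ does not exist, which the paper handles by noting that the corresponding union is empty.
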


The monotonicity condition \eqref{MonStrong} is weaker than \ref{ass_Monotonnicity_boundary_sequences} and
can be easily checked when computing the boundary sequences using the recursive algorithm above. 
Indeed, \citet[Lemma 2]{Gandy2020b} entails that, if $\alpha \leq 1/4$, there exists a constant $n_0 \in \mathbb N$ such that \eqref{MonStrong} holds for all $N \geq n_0$. If the condition does not hold for lower values, the sequences can be adjusted so that the monotonicity is also guaranteed for $N < n_0$.

\begin{proof}
First, consider the case, where both $\mathfrak U = \min \{h_j: h_j \geq h\}$ and $\mathfrak L = \max \{h_j: h_j \leq h\}$ exist, then by \eqref{MonStrong} and \eqref{eq_stopping_probability} for all $0< h< 1$
\begin{align}\label{eq_FWER_1}
    & \P_{h}\left( \bigcup_{h_j\geq h} \left\{\tau^{(h_j)} < \infty,S_{\tau^{(h_j)}}\geq U_{\tau^{(h_j)}}^{(h_j)}\right\}  \right)= \P_{h}\left( \tau^{( \mathfrak U)} < \infty,S_{\tau^{(\mathfrak U)}}\geq U_{\tau^{(\mathfrak U)}}^{( \mathfrak U)}\right) \leq \frac{\alpha}{2},\\
    & \P_{h}\left( \bigcup_{h_j\leq h} \left\{\tau^{(h_j)} < \infty,S_{\tau^{(h_j)}}\leq L_{\tau^{(h_j)}}^{(h_j)}\right\}  \right)= \P_{h}\left( \tau^{( \mathfrak L)} < \infty,S_{\tau^{(\mathfrak L)}}\leq L_{\tau^{(\mathfrak L)}}^{( \mathfrak L)}\right) \leq \frac{\alpha}{2}.\label{eq_FWER_2}
    \end{align}
    This yields the assertion. In the case, where $\mathfrak U$ or $\mathfrak L$ do not exist, the unions on the left-hand side of \eqref{eq_FWER_1} and \eqref{eq_FWER_2} are empty, such that the corresponding probabilities are zero, and the assertion follows.
 \end{proof}

Finally, the algorithm will stop with probability one in finite time, if
the true value is inside one of the buckets.  In fact, \citet[Theorem 1]{Gandy2009} not only shows that the stopping time is finite in this case, but also  that it has finite expectation.
   
\subsection{General boundary sequences without parametric assumptions}\label{sec_sequential_construction}\label{subsec_sequential_asymptotic_boundaries}
The algorithm in the previous section only works for $H(\bm{\xi})$ Bernoulli distributed, which is the situation in very specific instances of Case \ref{Case1} such as simplicial depth, spherical depth or lens depth. As pointed out in Remark~\ref{rem_generalization}, obtaining similar algorithms may be difficult, or even impossible as in the non-parametric setting of Case \ref{Case4}. To solve this, we discuss  a different non-parametric 
construction method for the boundary sequences, where the statistical guarantees are  obtained in a suitable asymptotic sense. For that we make use of a non-parametric sequential testing approach recently introduced by~\cite{gnettner2025newflexibleclasssharp}. This approach has its origins in the sequential change point analysis of time series and was initially introduced by \cite{Chu1996} and  later extended by \cite{Horvath2004}. A detailed review on the progress in this field is provided by \cite{ClaudiaAlex2023}.

Recall, that the summands $H(\bm{\xi}_j)$ of the seMCD-process in \eqref{eq_SM} are iid with unknown mean $h=\E(\bm{\xi}_j)$, which is the quantity of interest. In this subsection, we will not make any assumptions on the distribution of $H(\bm{\xi}_1)$, but we do require second moments and non-degeneracy, i.e.\  $0<\sigma^2:=\V(H(\bm{\xi}_1))<\infty$ with typically unknown variance $\V(H(\bm{\xi}_1))$. The latter condition is typically fulfilled in a depth context. Thus, the methodology derived in this section is applicable in all situations discussed in Section~\ref{section_2_depth} as long as second moments exist. This comes at the cost of replacing the finite-sample guarantee with respect to the tolerance in Section~\ref{sec_gandy_approach} by asymptotic guarantees.

To explain the motivation behind the asymptotics,  we first replace the finite-sample guarantee in \eqref{I} by
\begin{align*}
    \lim_{m\to\infty}\P_{h}\left( L_{N,m}^{(h)}(\sigma^2)<S_N<U_{N,m}^{(h)}(\sigma^2) \text { for all }N\ge m\right)= 1-\alpha.
\end{align*}
The parameter $m$ plays a similar  role here as the sample size in classical statistics and can be considered as a burn-in period for the asymptotic approximations to work well. The boundary sequences depend on $m$ in a similar way as the normalization of the mean in a $t$-test depends on the sample size. The burn-in parameter $m$ has to be chosen in advance but will typically be much smaller than standard choices for Monte Carlo repetitions. In our simulation study, we choose $m=500$.

By a careful mathematical analysis it is possible to construct boundary sequences for $N<m$ without violating the above asymptotic guarantee. Thus,  it is possible to stop before $m$ samples have been taken, as long as the evidence is strong enough. Furthermore, in practice, we need to replace the usually unknown variance $\sigma^2$ with a suitable estimator $\tilde{\sigma}_N^2$ based only on the samples $H(\bm{\xi}_1),\ldots, H(\bm{\xi}_N)$ taken so far. In summary, we  construct boundary sequences, such that  the following asymptotic statement holds 
\begin{align}
    \lim_{m\to\infty}\P_{h}\left( L_{N,m}^{(h)}(\tilde{\sigma}_N^2)<S_N<U_{N,m}^{(h)} (\tilde{\sigma}_N^2)\text { for all }N\ge 1\right)= 1-\alpha. \tag{I$^*$}\label{Iasym}
\end{align}
This statement will replace \eqref{I} in the non-parametric situation.

Indeed, \citet[Theorem~3.1]{gnettner2025newflexibleclasssharp} yields after some elementary algebra:
\begin{Theorem}\label{theorem_boundary_nonparam}
Let $\{\bm{\xi}_j\}$ be iid with $h=\E (H(\bm{\xi}_1))$ and $0<\V(H(\bm{\xi}_1))<\infty$. Then,  \eqref{Iasym} holds  for the following class of boundary sequences: 
\begin{align}\label{eq_bound_np}
   & L_{N,m}^{(h)}(\tilde{\sigma}_N^2)=N\cdot h-\tilde{\sigma}_N\cdot w_{(\gamma_1,\gamma_2)}(N,m),\quad U_{N,m}^{(h)}(\tilde{\sigma}_N^2)=N\cdot h+\tilde{\sigma}_N\cdot w_{(\gamma_1,\gamma_2)}(N,m),\\
   & \text{where }w_{(\gamma_1,\gamma_2)}(N,m)=c_{\alpha}(\gamma_1,\gamma_2)\cdot m^{\gamma_2-\frac 1 2} \cdot N^{\gamma_1}\cdot (m+N)^{1-\gamma_1-\gamma_2}\notag
\end{align}
for $0\le \gamma_1,\gamma_2<1/2$, 
 \begin{align*}
     \tilde{\sigma}_N^2=\hat{\sigma}_N^2+\mathds 1_{\{\hat{\sigma}_N^2=0\}}\cdot \infty,\quad \hat{\sigma}^2_N=\frac{1}{N-1}\sum_{j=1}^N\left(H(\bm{\xi}_j)-\frac{1}{N}\sum_{i=1}^NH(\bm{\xi}_i)\right)^2,
 \end{align*}
 with the convention that $0\cdot\infty=0$,
 and $c_{\alpha}(\gamma_1,\gamma_2)$ is the $(1-\alpha)$-quantile of $\sup_{0\le t\le 1}\frac{|B(t)|}{t^{\gamma_1}\, (1-t)^{\gamma_2}}$ for a standard Brownian bridge $\{B(\cdot)\}$.
\end{Theorem}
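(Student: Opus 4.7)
My plan is to show that the event in \eqref{Iasym} can be rewritten as a weighted supremum statement for the centered partial sum process, whose limit distribution is precisely the weighted Brownian bridge appearing in the definition of $c_\alpha(\gamma_1,\gamma_2)$. The assertion will then reduce, via a direct algebraic reduction, to the weak convergence result of \citet{gnettner2025newflexibleclasssharp}.

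First, I would rewrite the event. By the symmetric form of $L_{N,m}^{(h)}$ and $U_{N,m}^{(h)}$ in \eqref{eq_bound_np}, the event in \eqref{Iasym} is equivalent to
\begin{align*}
\sup_{N\ge 1}\frac{|S_N-Nh|}{\tilde{\sigma}_N\cdot m^{\gamma_2-1/2}\cdot N^{\gamma_1}\cdot (m+N)^{1-\gamma_1-\gamma_2}}\;<\;c_\alpha(\gamma_1,\gamma_2).
\end{align*}
Since $S_N-Nh=\sum_{j=1}^N(H(\bm\xi_j)-h)$ is a centered iid sum with finite variance $\sigma^2$, and since $\tilde{\sigma}_N^2\to\sigma^2$ almost surely by the strong law of large numbers (on the event $\hat\sigma_N^2>0$, which occurs eventually with probability one by non-degeneracy), the denominator will asymptotically act like $\sigma\cdot m^{\gamma_2-1/2}\cdot N^{\gamma_1}\cdot (m+N)^{1-\gamma_1-\gamma_2}$.

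Next, I would apply the time change $t=N/(m+N)\in[0,1)$, under which $N=mt/(1-t)$ and $m+N=m/(1-t)$. A short calculation gives
\begin{align*}
m^{\gamma_2-1/2}\cdot N^{\gamma_1}\cdot (m+N)^{1-\gamma_1-\gamma_2}\;=\;m^{1/2}\cdot t^{\gamma_1}\cdot (1-t)^{\gamma_2-1}.
\end{align*}
By Donsker's invariance principle combined with the standard identification of the time-changed Brownian motion $(1-t)W(t/(1-t))$ with a Brownian bridge $B(t)$, one has, in an appropriate weighted supremum sense,
\begin{align*}
\frac{|S_{N}-Nh|}{\sigma\sqrt{m}\,(1-t)^{-1}}\;\dto\;|B(t)|\qquad \text{under }t=N/(m+N),
\end{align*}
so that the scaled process converges to $|B(t)|/(t^{\gamma_1}(1-t)^{\gamma_2})$. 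Taking supremum, the probability of the event above tends to $P(\sup_{t\in[0,1]}|B(t)|/(t^{\gamma_1}(1-t)^{\gamma_2})<c_\alpha(\gamma_1,\gamma_2))=1-\alpha$ by definition of the quantile.

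Rather than reprove these convergence statements from scratch, I would appeal directly to \citet[Theorem~3.1]{gnettner2025newflexibleclasssharp}, which is designed for exactly this weighted-supremum/open-end monitoring setup and already incorporates the replacement of $\sigma^2$ by a consistent variance estimator $\tilde\sigma_N^2$. The remaining work is bookkeeping: matching the normalisation in that theorem to $w_{(\gamma_1,\gamma_2)}(N,m)$ as above, and verifying that the conditions $0\le \gamma_1,\gamma_2<1/2$ are exactly those under which the weighted Brownian bridge supremum is a.s.\ finite (so $c_\alpha(\gamma_1,\gamma_2)$ is well-defined) and the weak convergence in that theorem holds. The main obstacle will be the weighted supremum near the endpoint $t\to 1$, i.e.\ near $N\to\infty$, where uniform tightness of the scaled partial sum process is delicate; but this is precisely the content of the cited theorem, and the restriction $\gamma_2<1/2$ is what makes it go through. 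Once the substitution $t=N/(m+N)$ is carried out and $\tilde\sigma_N$ replaced by $\sigma$ via Slutsky, the result follows.
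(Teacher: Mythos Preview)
Your proposal is correct and follows essentially the same route as the paper: the paper's entire proof is the sentence ``\citet[Theorem~3.1]{gnettner2025newflexibleclasssharp} yields after some elementary algebra,'' and your time-change computation $m^{\gamma_2-1/2}N^{\gamma_1}(m+N)^{1-\gamma_1-\gamma_2}=m^{1/2}t^{\gamma_1}(1-t)^{\gamma_2-1}$ under $t=N/(m+N)$ is precisely that elementary algebra spelled out. You have in fact supplied more detail than the paper does.
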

\citet[Theorem~3.1]{gnettner2025newflexibleclasssharp} allow for more general shape functions $w(N,m)$. We have selected the above shape functions because the critical values $c_{\alpha}$ are obtained as the quantiles of a supremum over  a weighted Brownian bridge, i.e.\ the supremum is only taken over the unit interval. Other shape functions typically require the calculation of quantiles of a supremum over the full positive line of a weighted Gaussian process, which is numerically problematic. For $\gamma_1=\gamma_2=0$ the quantiles of the limit are theoretically known, otherwise they can be approximated e.g.\ by the methodology introduced in \cite{franke2022adaptive}.

\begin{Remark}\label{rem_offset}
    In order to avoid small-sample instabilities due to the estimation of the variance, it can be beneficial to take more than a minimal number $\mathfrak l_m$ of samples or equivalently set $\tilde{\sigma}_j=\infty$ for $j\le \mathfrak l_m$, where $\mathfrak l_m$ is chosen much smaller than $m$. Mathematically, if $\mathfrak l_m/m\to 0$, the assertion of Theorem~\ref{theorem_boundary_nonparam} remains true.
\end{Remark}

The choices $\gamma_1=\gamma_2=1/2$ are excluded from Theorem \ref{theorem_boundary_nonparam}. These choices lead to $w(N,m)$ being proportional to $\sqrt{N}$ as in standard a-posteriori testing.  However, by the law of iterated logarithm, \eqref{Iasym} does not hold for these choices. The parameter $\gamma_1$ regulates the boundary sequence at the beginning, while $\gamma_2$ regulates it at infinity. Generally, a smaller $\gamma_j$, $j=1,2$, leads to more conservative boundaries at the respective end. As a compromise between being close to the a-posteriori rate at infinity but not too liberal at the beginning, we choose $\gamma_1=0.1$ and $\gamma_2=0.4$ in the simulations. In fact, for the split point $h=0.25$ this choice of $\gamma_1,\gamma_2$ with $\alpha=0.05$, $m=500$ and variance $h(1-h)$ (the latter corresponds to the Bernoulli case) is very close to the exact boundary sequences from Section~\ref{sec_gandy_approach} with $\kappa=1000$. These asymptotic boundaries are slightly more conservative for small $N$ compared to the exact ones and slightly less conservative for large $N$.

The boundary sequences in \eqref{eq_bound_np} clearly fulfil Condition~\ref{ass_Monotonnicity_boundary_sequences}. Thus, combining Lemma~\ref{lemma_new_FWER} with Theorem~\ref{theorem_boundary_nonparam} yields:

\begin{Corollary}\label{cor_fbr}
Under the assumptions of Theorem~\ref{theorem_boundary_nonparam}, it holds 
\begin{align*}
    \liminf_{m\to\infty}\P_h(h\in \sB_m)\ge 1-\alpha,
\end{align*}
where $\sB_m$ is the output seMCD-bucket of the algorithm based on the boundary sequences \eqref{eq_bound_np}.
\end{Corollary}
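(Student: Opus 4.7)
The plan is to derive Corollary~\ref{cor_fbr} as a direct consequence of Lemma~\ref{lemma_new_FWER} combined with Theorem~\ref{theorem_boundary_nonparam}, the only preliminary task being to check that the non-parametric boundary sequences in \eqref{eq_bound_np} meet the monotonicity condition required by Lemma~\ref{lemma_new_FWER}.

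First I would verify Condition~\ref{ass_Monotonnicity_boundary_sequences} for the sequences in \eqref{eq_bound_np}. This is immediate from their explicit form: for any fixed $m$, $N$ and any realisation $\tilde{\sigma}_N^2$ of the variance estimator (which does not depend on any candidate $h$), and for $h_-\le h_+$,
\begin{align*}
U_{N,m}^{(h_+)}(\tilde{\sigma}_N^2)-U_{N,m}^{(h_-)}(\tilde{\sigma}_N^2)
=N(h_+-h_-)
=L_{N,m}^{(h_+)}(\tilde{\sigma}_N^2)-L_{N,m}^{(h_-)}(\tilde{\sigma}_N^2)\ge 0,
\end{align*}
so both families are non-decreasing in the split point argument, as required.

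Next, for each fixed $m$ I would apply Lemma~\ref{lemma_new_FWER} to the seMCD-algorithm run with the boundary sequences \eqref{eq_bound_np} (and split points in the range of possible values of $h$). Denoting the resulting output bucket by $\sB_m$, the lemma yields
\begin{align*}
\P_h(h\in\sB_m)\ge \P_h\bigl( L_{N,m}^{(h)}(\tilde{\sigma}_N^2)<S_N<U_{N,m}^{(h)}(\tilde{\sigma}_N^2)\text{ for all }N\in\mathbb{N}\bigr).
\end{align*}
Note that the proof of Lemma~\ref{lemma_new_FWER} only uses the monotonicity of the boundaries in the split-point argument and not any particular functional form, so it applies verbatim here even though the boundaries are random through $\tilde{\sigma}_N^2$.

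Finally, I would take the liminf over $m$ on both sides of the inequality above and invoke \eqref{Iasym} from Theorem~\ref{theorem_boundary_nonparam}, which gives
\begin{align*}
\liminf_{m\to\infty}\P_h(h\in\sB_m)
\ge \lim_{m\to\infty}\P_h\bigl( L_{N,m}^{(h)}(\tilde{\sigma}_N^2)<S_N<U_{N,m}^{(h)}(\tilde{\sigma}_N^2)\text{ for all }N\ge 1\bigr)
=1-\alpha,
\end{align*}
which is precisely the claim. There is no real obstacle: the non-trivial work is entirely absorbed into Theorem~\ref{theorem_boundary_nonparam} (and thence into the result of \cite{gnettner2025newflexibleclasssharp}). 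The only thing one should be slightly careful about is that the monotonicity check in the first step is a sample-path statement holding for every realisation of $\tilde{\sigma}_N^2$, so that Lemma~\ref{lemma_new_FWER}'s pathwise set-inclusion argument goes through before any probability is taken.
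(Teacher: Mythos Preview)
Your proposal is correct and follows exactly the route the paper takes: check that the boundary sequences in \eqref{eq_bound_np} satisfy Condition~\ref{ass_Monotonnicity_boundary_sequences}, then combine Lemma~\ref{lemma_new_FWER} with Theorem~\ref{theorem_boundary_nonparam}. You have in fact spelled out more detail than the paper does (the explicit monotonicity computation and the remark about $\tilde{\sigma}_N^2$ being common to all split points), but the argument is identical.
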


Similarly, the proof of \citet[Theorem~3.2]{gnettner2025newflexibleclasssharp} shows that, with the notation of Lemma~\ref{lem_new_stop}, the seMCD-process will eventually be above the upper boundary sequence associated with  $\wideunderbar{h}$ as well as below the lower boundary sequence associated with $\widebar{h}$. Consequently, in combination with Lemma~\ref{lem_new_stop}, we obtain the following stopping guarantee of the algorithm for all values $h$ that are contained in interior of one of the seMCD-buckets.

\begin{Corollary}\label{cor_47}
Under the assumptions of Theorem~\ref{theorem_boundary_nonparam}, it holds for the stopping time $\tau_m(\gamma_1,\gamma_2)$ of the seMCD-algorithm with the boundary sequences from \eqref{eq_bound_np} for $\gamma_2>0$
\begin{align*}
P_h(\tau_m(\gamma_1,\gamma_2)<\infty)=1\text{ for each $m$ and }\gamma_2>0, \qquad \lim_{m\to\infty}P_h(\tau_m(\gamma_1,0)<\infty)=1,
\end{align*}
where the true value $h$ is contained in the interior of one of the seMCD-buckets.
\end{Corollary}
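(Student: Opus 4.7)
The plan is to invoke Lemma~\ref{lem_new_stop} directly. Writing $(\wideunderbar h, \widebar h)$ for the seMCD-bucket containing $h$ in its interior, so that $\wideunderbar h < h < \widebar h$, Lemma~\ref{lem_new_stop} reduces the claim to showing
\begin{align*}
\P_h\Bigl(\liminf_{N\to\infty}\bigl(S_N - U_{N,m}^{(\wideunderbar h)}(\tilde\sigma_N^2)\bigr) > 0\Bigr) = 1 \quad \text{and} \quad \P_h\Bigl(\liminf_{N\to\infty}\bigl(L_{N,m}^{(\widebar h)}(\tilde\sigma_N^2) - S_N\bigr) > 0\Bigr) = 1,
\end{align*}
since then the lower bound in that lemma becomes $1+1-1=1$.

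The main tools are the strong law of large numbers, yielding $S_N/N \to h$ almost surely, together with strong consistency of the sample variance: because $0<\sigma^2<\infty$, we have $\hat\sigma_N^2 \to \sigma^2$ a.s., so eventually $\hat\sigma_N^2 > 0$ and $\tilde\sigma_N = \hat\sigma_N \to \sigma$ a.s. Dividing the first expression by $N$ gives
\begin{align*}
\frac{S_N - U_{N,m}^{(\wideunderbar h)}(\tilde\sigma_N^2)}{N} = \left(\frac{S_N}{N} - \wideunderbar h\right) - \tilde\sigma_N \cdot \frac{w_{(\gamma_1,\gamma_2)}(N,m)}{N}.
\end{align*}
Substituting the definition of $w_{(\gamma_1,\gamma_2)}(N,m)$ shows that, as $N\to\infty$, the ratio $w_{(\gamma_1,\gamma_2)}(N,m)/N = c_\alpha(\gamma_1,\gamma_2)\, m^{\gamma_2-1/2} N^{\gamma_1-1}(m+N)^{1-\gamma_1-\gamma_2}$ converges to $0$ when $\gamma_2 > 0$ and to $c_\alpha(\gamma_1,0)\, m^{-1/2}$ when $\gamma_2 = 0$. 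A symmetric computation applies to the second liminf with $\wideunderbar h$ replaced by $\widebar h$.

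For $\gamma_2 > 0$ these limits combined with the a.s.\ convergence of $\tilde\sigma_N$ imply $(S_N - U_{N,m}^{(\wideunderbar h)})/N \to h - \wideunderbar h > 0$ a.s., so $S_N - U_{N,m}^{(\wideunderbar h)} \to \infty$ a.s., and analogously $L_{N,m}^{(\widebar h)} - S_N \to \infty$ a.s.; Lemma~\ref{lem_new_stop} then yields $\P_h(\tau_m(\gamma_1,\gamma_2)<\infty) = 1$ for each $m$. The harder part is the $\gamma_2 = 0$ case, where the boundary drift is of the same order $N$ as $S_N$: the a.s.\ limit of the normalized difference becomes $h - \wideunderbar h - \sigma c_\alpha(\gamma_1,0)\, m^{-1/2}$, which is strictly positive only once $m$ exceeds a threshold of the form $\sigma^2 c_\alpha(\gamma_1,0)^2 / \min(h - \wideunderbar h,\, \widebar h - h)^2$. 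For all such $m$ both liminfs remain positive a.s., hence $\P_h(\tau_m(\gamma_1,0)<\infty) = 1$, and letting $m\to\infty$ delivers the claim. The hypothesis that $h$ lies strictly inside one of the seMCD-buckets, ensuring that both gaps $h-\wideunderbar h$ and $\widebar h - h$ are positive, is essential at both stages.
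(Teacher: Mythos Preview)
Your argument is correct and follows the same high-level route as the paper: both invoke Lemma~\ref{lem_new_stop} and then verify that the seMCD-process eventually exceeds $U_{N,m}^{(\wideunderbar h)}$ and falls below $L_{N,m}^{(\widebar h)}$. The difference is that the paper delegates this verification to the proof of Theorem~3.2 in \cite{gnettner2025newflexibleclasssharp}, whereas you give a direct, self-contained argument via the strong law of large numbers and the explicit growth rate $w_{(\gamma_1,\gamma_2)}(N,m)/N \asymp N^{-\gamma_2}$. Your approach has the advantage of making transparent exactly why the two regimes $\gamma_2>0$ and $\gamma_2=0$ behave differently: in the former the boundary drift is sublinear so any gap $h-\wideunderbar h>0$ suffices, while in the latter the drift is linear with slope $\sigma c_\alpha(\gamma_1,0)\,m^{-1/2}$ and one must choose $m$ large enough to beat it. In particular, your argument actually shows slightly more than the stated limit in the $\gamma_2=0$ case, namely that $\P_h(\tau_m(\gamma_1,0)<\infty)=1$ for all sufficiently large $m$.
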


In view of this result, a choice of $\gamma_2>0$ is clearly preferable.

\section{Simulation and real data analyses}\label{sec_buckets_simulations}
In this section, we present the results of an  empirical study for illustrative purposes. The calculations in this section have all been conducted with MATLAB  (R2019b, or newer) with our own implementation of depth functions. The greedy seMCD-algorithm is as in Section~\ref{sec_another_look_algorithm}, with the boundary sequences as in Theorem~\ref{theorem_boundary_nonparam} with $\gamma_1=0.1$, $\gamma_2=0.4$ and $m=500$, $\mathfrak l_m=10$  is as in Remark~\ref{rem_offset}. All empirical results are based on $1000$ repetitions.

In Section \ref{fs} we focus on a comparison of the stopping and run time for various combinations of the tuning parameters for the seMCD-algorithm.  In Section \ref{ss}, we discuss maximum depth classification with seMCD-buckets, before Section~\ref{subsec_binary_classification} illustrates the seMCD-methodology in binary classification. The source code for this section is available in the ancillary files of this arXiv preprint.

\subsection{Empirical performance of the seMCD-algorithm for overlapping buckets}\label{fs}

In this section, we compare the performance of the seMCD-algorithm with different tolerance parameters:
\[\textbf{T1 } \alpha=1\ \%,\qquad \textbf{T2 } \alpha=2.5\ \%\]
and overlapping bucket configurations: 
\begin{align*}
    &\textbf{W1 } [0,0.05), (0.025,0.075),(0.05,0.1),(0.075, 0.125),\ldots, (0.6,0.65],\\
    &\textbf{W2 } [0,0.1),(0.05,0.15),(0.1,0.2),(0.15,0.25),\ldots,  (0.55,0.65].
\end{align*}
The buckets in \textbf{W1} have a  width of $0.05$, while the ones in \textbf{W2} have a width of $0.1$. We use the modified band depth (see Appendix~\ref{subsec_functional_depth_A}) as an instance of  functional depth and the IRW depth (see Appendix~\ref{subsec_IRW_depth}) as an instance of  multivariate depth. The selection is based on the fact that the exact depth value can be computed for the first and that the literature contains  a recommended number of plain-vanilla Monte Carlo samples for the second, as we comment below.

\subsubsection{Modified band depth based on 2 curves-bands}\label{subsec_sim_mod_band_depth}
The functional modified band depth 
allows for the calculation of the exact depth value. Thus, not only will we evaluate computational properties of the seMCD-algorithm, but also calculate the empirical false-bucket-rate (\textbf{FBR}), i.e.\ the percentage of cases in which the output seMCD-bucket does not contain the true value.

To elaborate, the aim is to compute the empirical depth with respect to 
an initial fixed sample of 100 iid standard Brownian motions on the time domain $[0,1]$, which have been simulated on an equidistant grid of length 0.001. We investigate the properties of the seMCD-algorithm where the target values are the empirical depths (with respect to the above fixed sample) in three different cases, with $t\in[0,1]$. (a) $x(t)=2$,  with true empirical depth value $0.0163$ for the initial sample, (b) $x(t)=\sin(2\pi t)$,  with  true empirical depth value $0.2864$ and (c) $x(t)=B_t$, a fixed Brownian motion simulated as above, with  true empirical depth value  $0.452$.
We then run the seMCD-algorithms $1000$ times and compare the number of samples taken until stopping as well as the false-bucket-rates (\textbf{FBR}).

Table~\ref{table_new_sim1} reports the average number of samples at the time of stopping as well as the false-bucket-rate, in the four cases resulting from the combination of the two studied tolerances and the two overlapping bucket configurations. The false-bucket-rate clearly stays very well below the tolerance in all cases. This is not surprising as Corollary~\ref{cor_fbr} in combination with Lemma~\ref{lemma_new_FWER} shows that the tolerance is conservative, in particular if the true value is not close to the split points. Indeed, while still conservative, the \textbf{FBR} is highest for (c) where the true value is closer to a split point than in the other cases. In each case where a decision was made for a false bucket, the false bucket was a direct neighbour of a correct bucket.

Furthermore, we illustrate the obtained results by providing  the boxplots of the stopping times in Figure~\ref{fig_boxplot_ModifiedBandDepth}.  Clearly, more samples are required to make a decision for narrower buckets, i.e.\ scenarios \textbf{W1} as well as a smaller tolerance \textbf{T1}, where the widths of the buckets has a stronger influence than the tolerance. For the different values in (a) to (c) the stopping time depends mainly on the signal-to-noise-ratio, where the signal is given by the distance to the bucket boundary (in case of overlapping ones the maximum of all buckets that contain the true value) and the noise is given by $\V (H(\bm{\xi}))$. In the above example, the empirical variance of all summands for exactly computing the empirical depth can be thought of as a surrogate and is given by 
$ 0.0040$ for (a), $0.0374$ for (b) and by $0.0665$ for (c).

\begin{table}[!b]
        \setlength{\tabcolsep}{1.5mm}
        \begin{center}
  \begin{tabular}{lrr rr rrr rr rrr}
&&\multicolumn{2}{c
}{\text{\textbf{W1T1}}}&& \multicolumn{2}{c 
}{\text{\textbf{W1T2}}}&&\multicolumn{2}{c 
}{\text{\textbf{W2T1}}}&&\multicolumn{2}{c 
}{\text{\textbf{W2T2}}}\\
&& average  & && average  & && average  & && average  & \\
&&samples & \textbf{FBR}&&samples & \textbf{FBR}&&samples & \textbf{FBR}&&samples & \textbf{FBR}\\\hline
(a) $x(t)=2$&&67.224 &0.0 \%&& \phantom{0}61.419 & 0.0 \% && 27.924&0.0 \%&& 26.149 &0.0 \%\\
(b) $x(t)=\sin(2\pi t)$&&1460.702 &0.0 \%&&1181.898&0.1 \%&&417.079&0.0 \%&&367.378&0.0 \%\\
(c) $x(t)=B_t$ &&1732.470&0.1 \%&& 1500.084&0.3 \%&&518.643&0.0 \%&&460.142&0.1 \%\\ 
\end{tabular}  
\end{center}
\caption{Using the modified band depth, average number of samples  taken at time of stopping (average samples) and false-bucket-rate (\textbf{FBR}) for (a) $x(t)=2$, (b) $x(t)=\sin(2\pi t)$ and (c) $x(t)=B_t$ (Brownian motion). \textbf{WiTj} stays for the combination of tolerance \textbf{Ti} with bucket configuration \textbf{Wj}, for $i,j\in\{1,2\}$.} 
\label{table_new_sim1}
\end{table}

\begin{table}[!b]
\setlength{\tabcolsep}{1.5mm}
\begin{center}
\begin{tabular}{lr rr rrrr}
    & &{Exact}&& \textbf{W1T1} & \textbf{W1T2} & \textbf{W2T1} & \textbf{W2T2} \\ \hline
    (a)  $x(t)=2$ && 93.5 && 3.2   &  2.9  & 1.3  & 1.2    \\
    (b) $x(t)=\sin(2\pi t)$ && 93.1 && 72.8  & 59.7  & 21.1    & 17.4    \\
    (c) $x(t)=B_t$ (Brownian motion) && 94.7 && 85.2   & 77.0 &25.3  & 22.8    \\ 
                
\end{tabular}
 \end{center}
\caption{Average runtime in milliseconds (ms) of the exact empirical depth calculation for the modified band depth and the seMCD-algorithms. The computations were run on a computer with an i9-8950HK CPU and 32 GB RAM.} 
\label{tabModFuncBandRuntime}
\end{table} 

\begin{figure}[!t]
\centering
\begin{subfigure}[h]{0.325\columnwidth}
        \includegraphics[width=\columnwidth, trim={3.7cm 9.5cm 4.3cm 10cm},clip]{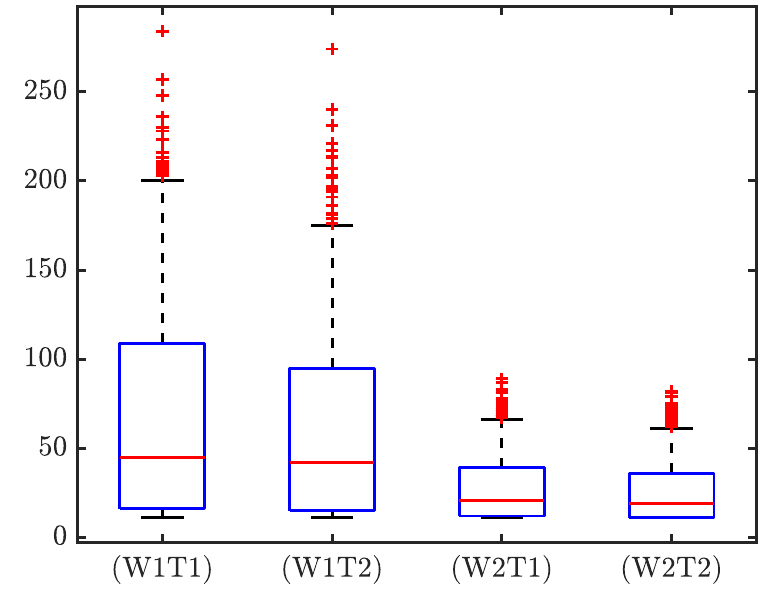}
        \vspace*{-4ex}
        \caption{\centering $x(t) = 2$.}
\end{subfigure}
\begin{subfigure}[h]{0.325\columnwidth}
        \includegraphics[width=\columnwidth, trim={3.7cm 9.5cm 4.3cm 10cm},clip]{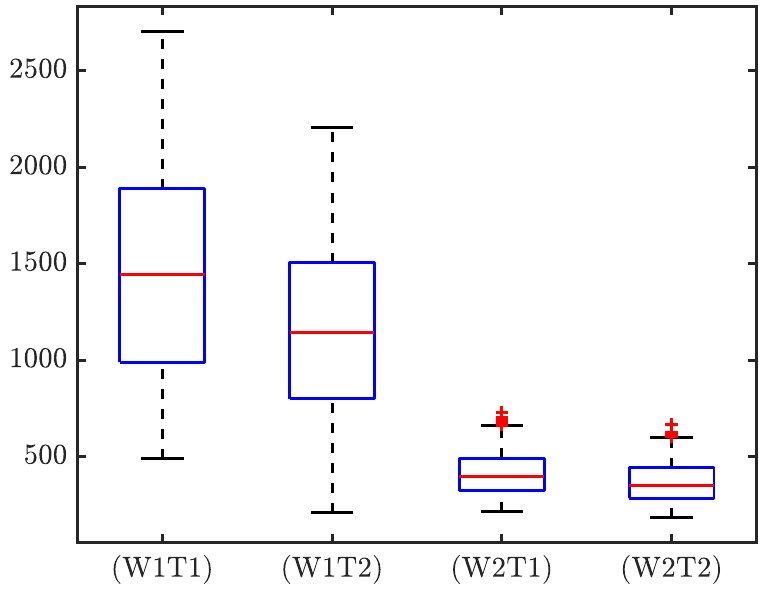}
        \vspace*{-4ex}
        \caption{\centering $x(t) = \sin(2 \pi t)$.}
\end{subfigure}
\begin{subfigure}[h]{0.325\columnwidth}
        \includegraphics[width=\columnwidth, trim={3.7cm 9.5cm 4.3cm 10cm},clip]{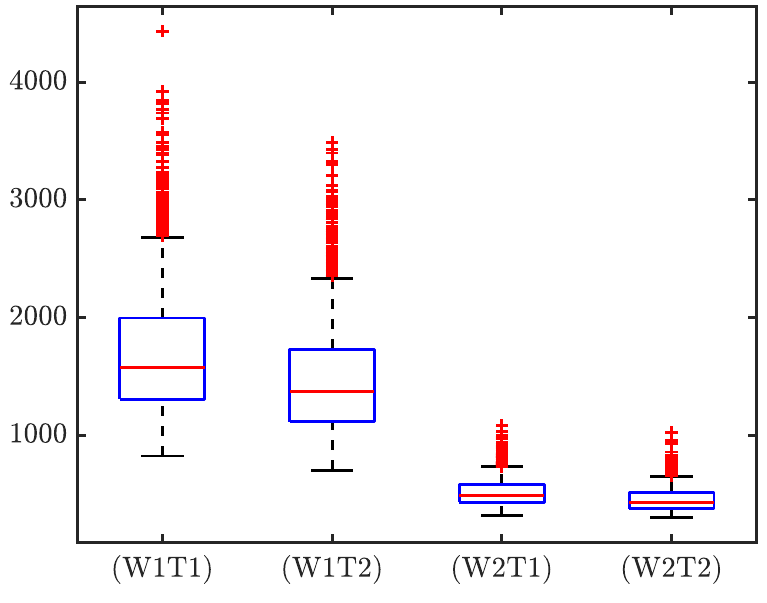}
        \vspace*{-4ex}
        \caption{\centering $x(t) = B_t$ (Brownian motion).}
\end{subfigure}
\caption{Boxplots for the number of samples at time of stopping for the modified band depth.}
\label{fig_boxplot_ModifiedBandDepth}
\end{figure}  

\begin{figure}[!t]
\centering
\begin{subfigure}[h]{0.325\columnwidth}
        \includegraphics[width=\columnwidth, trim={3.7cm 9.5cm 4.3cm 10cm},clip]{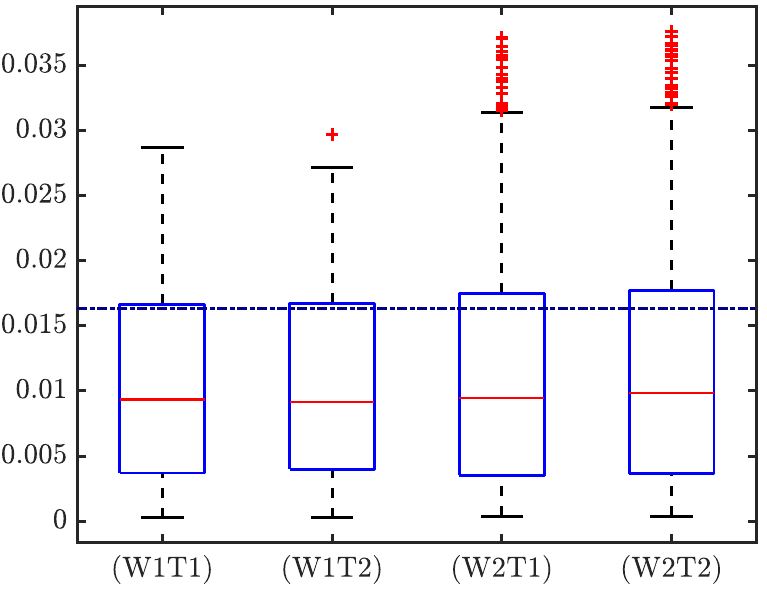}
        \vspace*{-4ex}
        \caption{\centering $x(t) = 2$.}
\end{subfigure}
\begin{subfigure}[h]{0.325\columnwidth}
        \includegraphics[width=\columnwidth, trim={3.7cm 9.5cm 4.3cm 10cm},clip]{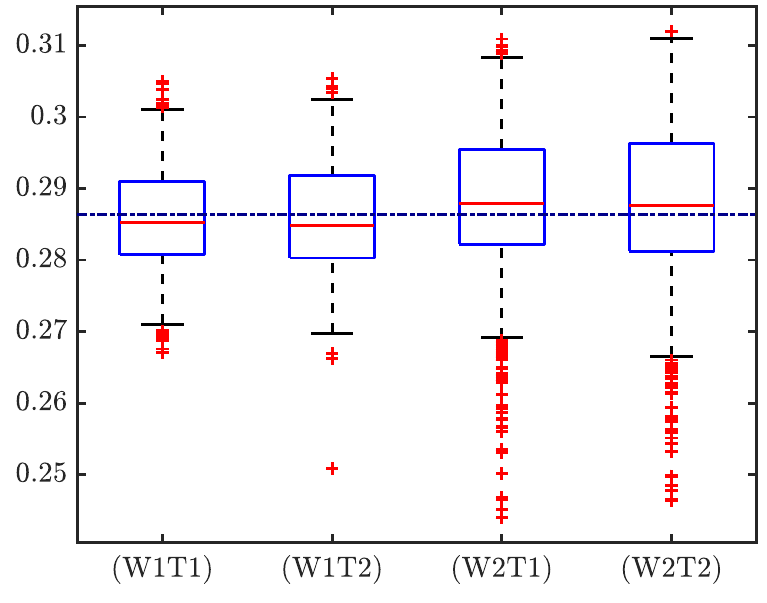}
        \vspace*{-4ex}
        \caption{\centering $x(t) = \sin(2 \pi t)$.}
\end{subfigure}
\begin{subfigure}[h]{0.325\columnwidth}
        \includegraphics[width=\columnwidth, trim={3.7cm 9.5cm 4.3cm 10cm},clip]{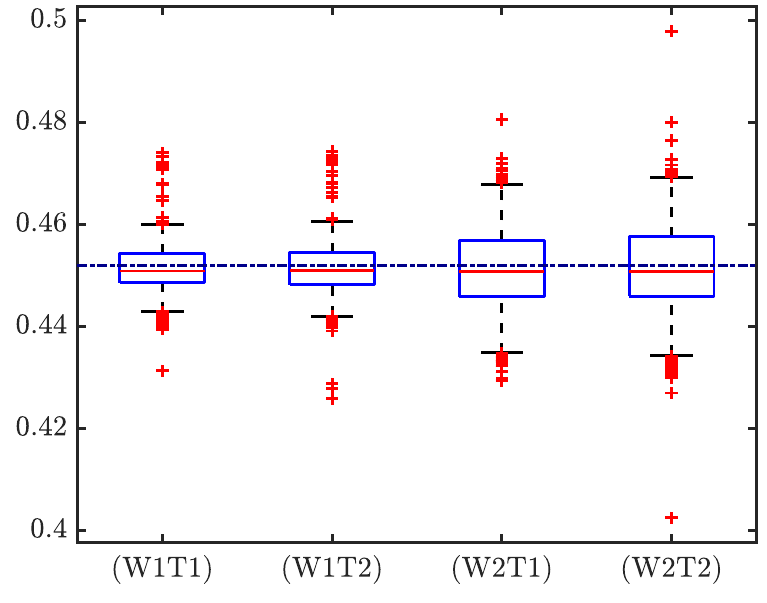}
        \vspace*{-4ex}
        \caption{\centering $x(t) = B_t$ (Brownian motion).}
\end{subfigure}
\caption{Boxplots for the point estimates $\hat{h}_{\tau}$ for the modified band depth. The dash-dotted line in the boxplots indicated the exact value. The plot shows that these values have to be interpreted with care.}
\label{fig_boxplot_ModifiedBandDepth1}
\end{figure}     

The number of required samples gives important information on computational aspects of the algorithm with the advantage that it does not depend on the implementation or the computer used. Having said that, we complement this information in Table~\ref{tabModFuncBandRuntime} with the average runtime in milliseconds. In each of the three cases, the seMCD-algorithm is faster than the exact calculation.
    
By construction of the non-parametric boundary sequences in Section~\ref{sec_sequential_construction}, it is guaranteed that the point estimate at the time of stopping, i.e.\ $\hat{h}(\tau)=\frac{1}{\tau}\sum_{j=1}^{\tau}H(\bm{\xi}_j)$ with $\tau=\tau_m(0.1,0.4)$ as in Corollary~\ref{cor_47} is inside the output seMCD-bucket. However, the precision of $\hat{h}(\tau)$ depends on the random stopping time $\tau$ and is particularly affected by the distance of the true value to the split points, where less samples are required to decide with some statistical certainty, that the true parameter is inside the bucket, if this distance is larger. Furthermore, due to the random $\tau$ depending on the samples taken, $\hat{h}(\tau)$ does not inherit the statistical properties of the sample mean for fixed sample size such as e.g.\ unbiasedness. We illustrate the behaviour of $\hat{h}(\tau)$ by the boxplots in Figure~\ref{fig_boxplot_ModifiedBandDepth1}. Clearly, the estimator in (a) is strongly biased, unlike the sample mean with fixed sample size, which can be explained by the fact that the true value is only contained in the lowest bucket. Therefore, the smaller the partial sum process happens to be, the quicker a rejection will occur. This is because the only relevant boundary sequence for the lowest bucket in the overlapping case is the lower boundary sequence of the split point $h_2$. In the other cases, this asymmetry is less pronounced if somewhat visible in (b), where it depends on the unbalancedness of the respective distance of the true value to both bucket boundary values of the true seMCD-bucket(s). Less surprisingly, the estimator becomes more precise with smaller width of the buckets (\textbf{W1}) and less pronounced also with smaller tolerance (\textbf{T1}) due to $\tau$ tending to be larger in these cases.

In conclusion, one has to be very careful with the interpretation of $\hat{h}(\tau)$. 

\subsubsection{IRW depth}
While the empirical depth in Section~\ref{subsec_sim_mod_band_depth} allows for an exact computation, this is not the case for the multivariate IRW depth (see Appendix~\ref{appendix_depths}), which requires Monte Carlo methods for its calculation. In fact, \cite{Ramsay2019} propose to use a plain-vanilla Monte Carlo approach based on $N=10^5$ samples, while \cite{Staerman2021} propose to use $N=100\cdot d$ samples. For dimension $d=100$, we  compare these values of $N$ with the number of samples at the stopping time of the seMCD-algorithms, in the four combinations of the tolerances and bucket configurations provided at the beginning of Section~\ref{fs}. To this end, we take a fixed sample of 100 iid $d$-variate standard Gaussian observations. Then, we apply the different Monte Carlo techniques $1000$ times to compute the empirical depth of the vector $ e \cdot (1, \ldots,1)^T \in \mathbb R^d$, for (a) $e=0$, (b) $e=0.5$ and (c) $e=5$, with respect to the fixed sample above. 

\begin{figure}[htb]
        \centering
        \begin{subfigure}[h]{0.325\columnwidth}
        \includegraphics[width=\columnwidth, trim={3.7cm 9.5cm 4.3cm 10cm},clip]{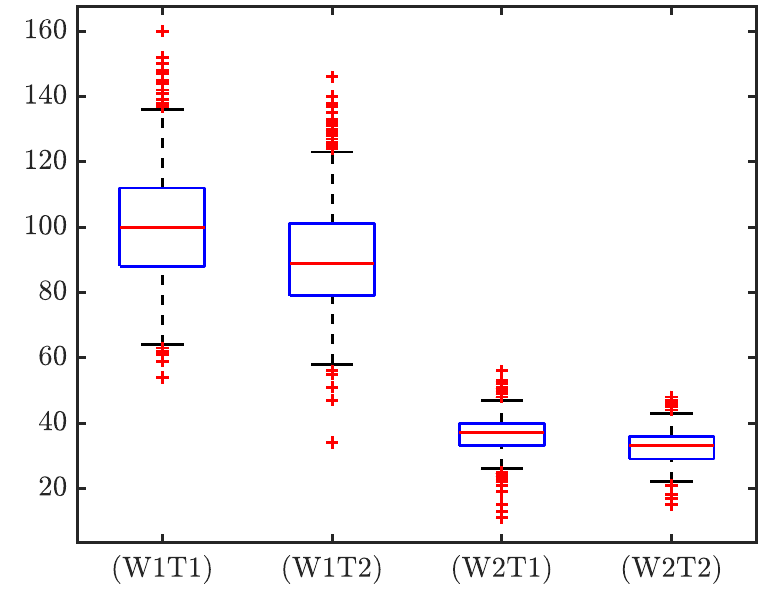}
        \vspace*{-4ex}
        \caption{\centering $e=0$.}
    \end{subfigure}
    \begin{subfigure}[h]{0.325\columnwidth}
        \includegraphics[width=\columnwidth, trim={3.7cm 9.5cm 4.3cm 10cm},clip]{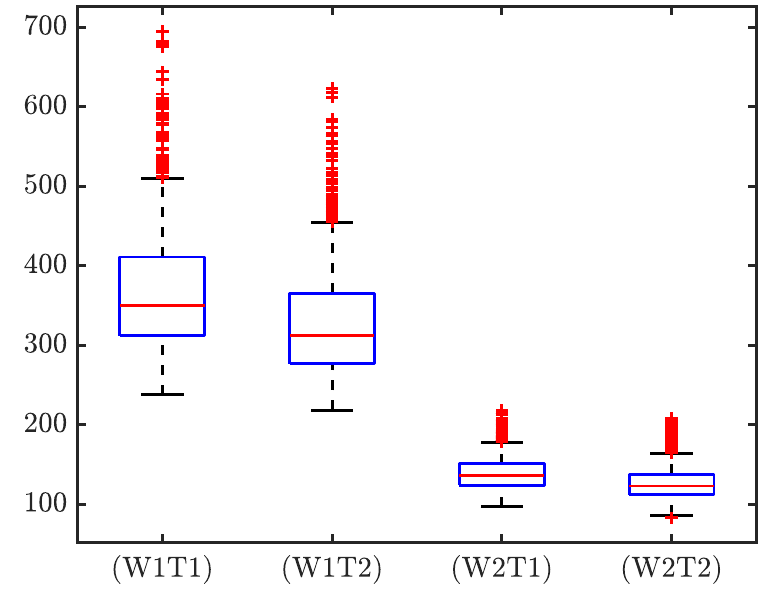}
        \vspace*{-4ex}
        \caption{\centering $e=0.5$.}
    \end{subfigure}
    \begin{subfigure}[h]{0.325\columnwidth}
                \includegraphics[width=\columnwidth, trim={3.7cm 9.5cm 4.3cm 10cm},clip]{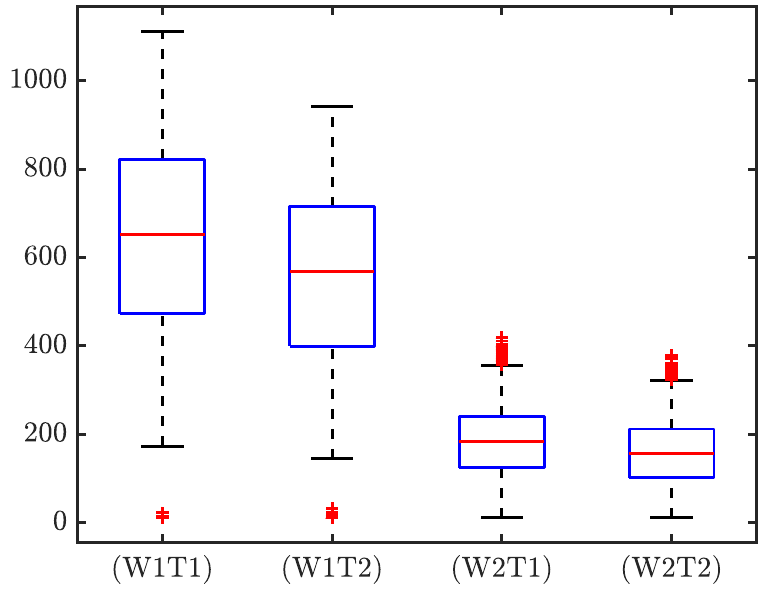}
        \vspace*{-4ex}
        \caption{\centering $e=5$.}
    \end{subfigure}
    \caption{Boxplots for the number of samples taken at time of stopping for the seMCD-algorithms for the IRW depth in dimension $d=100$.}
    \label{fig_boxplot_IRW}
\end{figure} 
    
\begin{table}[!t]
            \setlength{\tabcolsep}{1.5mm}
        \begin{center}
            \begin{tabular}{lr r rr rrrr}
           && \multicolumn{2}{c
           }{Plain-Vanilla}&&\multicolumn{4}{c
           }{seMCD-algorithms}\\[1mm]
                && {$N=10^5$}& $N=100\cdot d$&& \textbf{W1T1} & \textbf{W1T2} & \textbf{W2T1} & \textbf{W2T2} \\ \hline
         (a)  $e=0$ &&  1\,643.9  & 265.0&& 22.5    & 19.5 & 7.1 &  7.0   \\  
            (b)  $e=0.5$ &&   1\,709.6  & 251.9&& 89.0  & 79.1  & 29.5    & 28.5\\
            (c)  $e=5$&&  1\,761.2 & 244.8 && 154.7   & 133.2 & 41.7 & 35.9 \\
            \end{tabular}
            \end{center}
            \caption{Average runtime in milliseconds (ms) for the IRW depth of the vector $e\cdot (1,\ldots,1)^T$ with $d=100$ (based on 1000 trials). The computations were run on a computer with an i9-8950HK CPU and 32 GB RAM.} 
            \label{tabIRWRuntime}
\end{table}

Figure~\ref{fig_boxplot_IRW} displays the stopping times for the seMCD-algorithms considered.  Clearly, the average number of samples required has a magnitude smaller than the $10\,000$ proposed by \cite{Staerman2021} for $d=100$ and, consequently, of the $10^5$ proposed in \cite{Ramsay2019}. In fact, with the exception of very few runs, in the case of \textbf{W1T1} with $e=5$, the stopping times are smaller than $1000$; with the majority much smaller. Additionally, Table~\ref{tabIRWRuntime} complements this information by reporting the average runtimes, in the different studied cases, of the seMCD-algorithm and the proposals in  \cite{Ramsay2019} and \cite{Staerman2021}. The obtained runtimes are much smaller for the seMCD-algorithm, which confirms the previously made observations.

\subsection{Anomaly detection for smtp data}\label{ss}  
We consider data from the 1999 KDD Cup Network Intrusion Detection Competition \citep{KDDCup1999}, where the task was to distinguish cyberattacks from normal network connections simulated in a military network environment. In particular, we focus on the smtp data set\footnote{\url{https://maxhalford.github.io/files/datasets/smtp.zip}, Retrieved on April 29, 2025.},  a data set consisting of 95\,156 observations in $\mathbb R^3$. This data set has previously been considered in \citet{Williams2002,Yamanishi2004,Liu2008,Ting2010,Tan2011} and in \cite{mozharovskyi2024anomalydetectionusingdata}. The latter study explored data depth as a tool for anomaly detection and benchmarked its performance against several other methods, including non-depth-based approaches. Due to the considerable size of the data set, a subset of 10\,000 points was used for comparison. However, this reduced sample may not sufficiently capture the complexity of the data structure, as none of the methods evaluated in \cite{mozharovskyi2024anomalydetectionusingdata} succeeded in identifying a true anomaly. Thus, we use the entire data set in this study, which is shown in Figure~\ref{fig_anomaly_detection}.  Panel~(a) highlights 30 points in red that correspond to the cyberattacks, which we aim to detect as anomalies. These points are 0.0315 \% of the data. 

\begin{figure}[htb]
        \centering
        \begin{subfigure}[h]{0.49\columnwidth}
        \includegraphics[width=\columnwidth,trim={3.85cm 0cm 3cm 0cm}, clip]{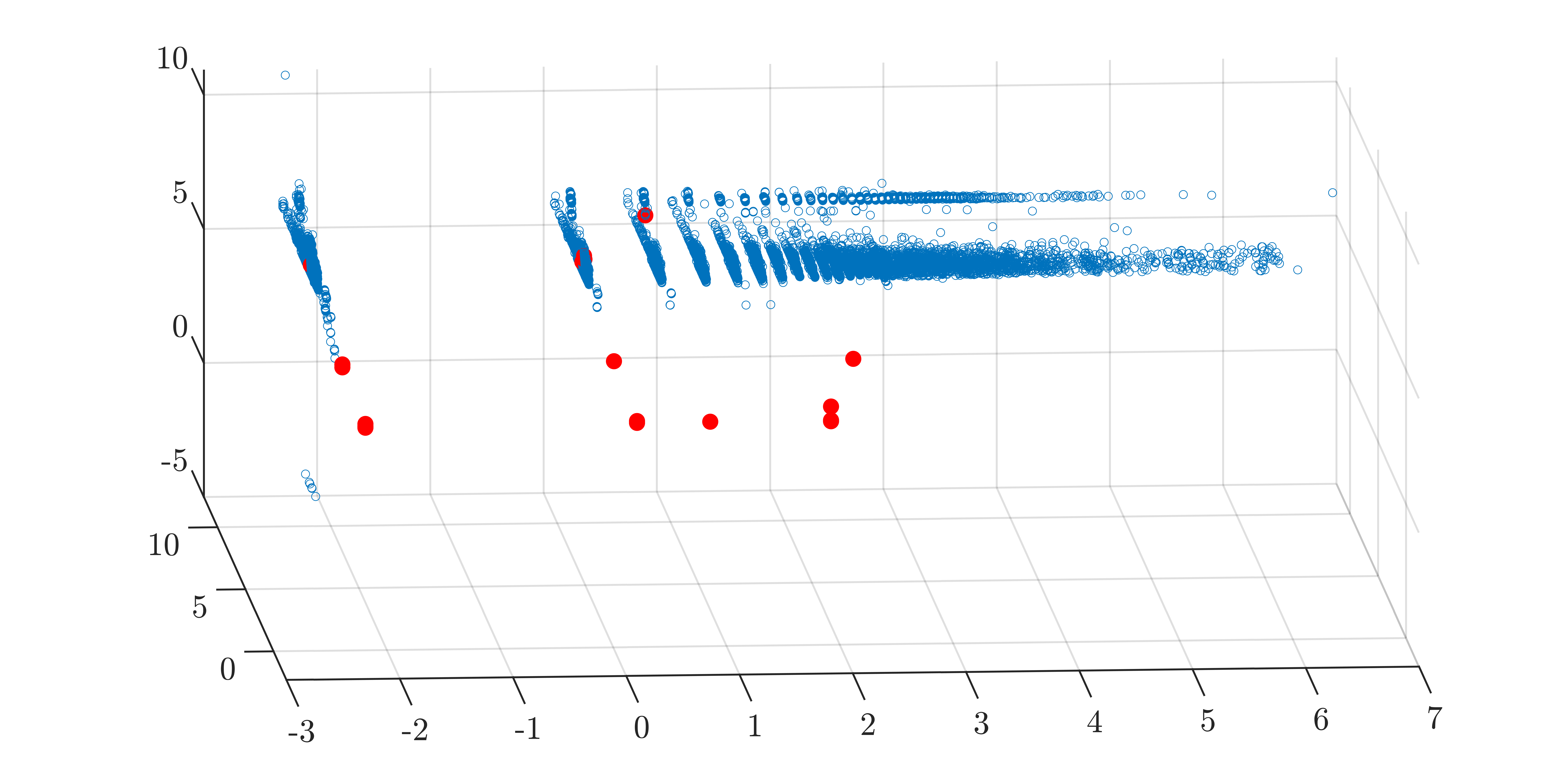}
        \vspace*{-4ex}
        \caption{\centering The points highlighted in red are the original anomalies.}
        \captionsetup{justification=centering}
    \end{subfigure}
    \begin{subfigure}[h]{0.49\columnwidth}
        \includegraphics[width=\columnwidth,trim={3.85cm 0cm 3cm 0cm}, clip]{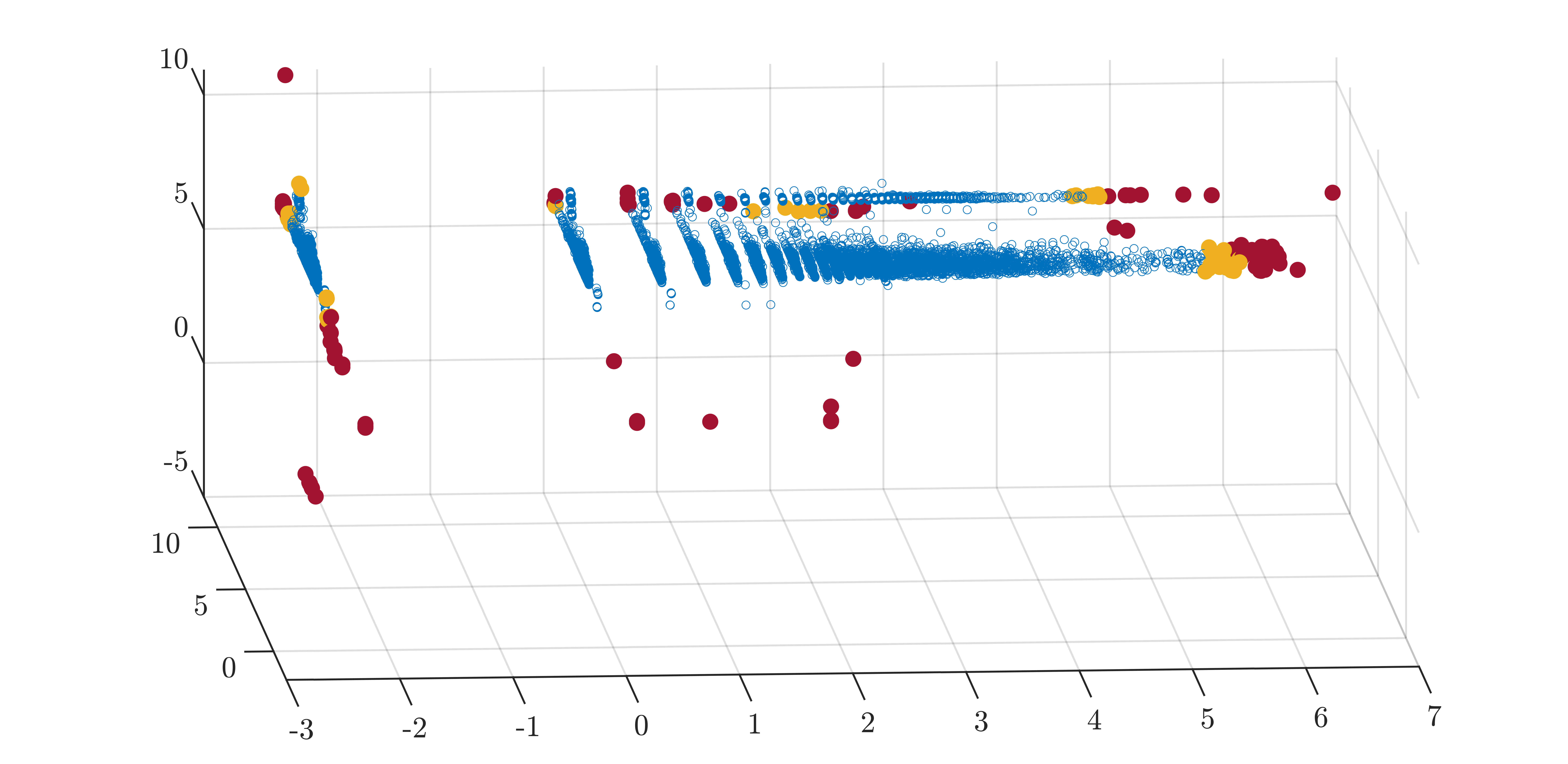}
        \vspace*{-4ex}
        \caption{\centering Points classified as anomalies (burgundy) and undecided points (orange).}
    \end{subfigure}
    \caption{Three-dimensional scatter plots of the smtp data set.}
    \label{fig_anomaly_detection}
    \end{figure}

In the seMCD-algorithm, we use the spherical depth \citep{elmore2006} with the overlapping buckets $(0,10^{-3}),(5^{-4},15^{-4}),(10^{-3},1)$. As the spherical depth is Type A with indicator kernel, we use the boundary sequences from Section~\ref{sec_gandy_approach} with tolerance $\alpha=0.01$ and the default spending sequence from \eqref{eq_default_spending}.

\begin{table}[htb]
 \begin{center}
\begin{tabular}{c c c c c c c c }
&\multicolumn{6}{c}{ Quantiles} &\\
minimum & 25 \% & 50 \% & 75 \% & 90 \% & 95 \% & 99 \% & maximum\\\hline
2& 4&7&17&46&93&977.94&378\,702\\ 
    \end{tabular}
    \end{center}
    \caption{Number of seMCD-samples taken before the seMCD-algorithm stopped when applied to each data point.}
    \label{table_quantiles_anomalies}
 \end{table}

The seMCD-algorithm outputs the bucket:
\begin{itemize}
\item   $(0,10^{-3})$ for 127 points (0.1335 \% of the data); including 20 cyberattacks.
\item  $(10^{-3},1)$ for 94\,950 points (99.7835 \% of the data); including 10 cyberattacks.
\item $(5^{-4},15^{-4})$ for 79 points (0.0830 \% of the data); none of them is a cyberattack.
\end{itemize}
Thus, the seMCD-algorithm detects 127 points as anomalies. As the possible depth values range in the interval (0,1), a split point of $10^{-3}$ can be considered to correspond to 0.1 \% of the data. Thus, it is not surprising that 0.1335 \% of the data are detected as an anomaly. The 10 cyberattacks  that are not detected as anomalies by the seMCD-algorithm are located within the layers of blue points and have  depth values higher than the $10^{-3}$ split point; see Figure~\ref{fig_anomaly_detection}. Thus, even with exact computation it would not be possible to detect these anomalies with the spherical depth.  Furthermore, the points classified as anomalies that are not among the original anomalies clearly are located in the outer layers of the blue points with the undecided points often in between the points not classified as anomalies by the algorithm (in blue) and the points classified as anomalies (in burgundy), in Panel~(b) of Figure~\ref{fig_anomaly_detection}. In consequence, the seMCD-algorithm behaves as expected and the points classified as anomalies, that were not among the original anomalies, are anomalies in the sense implied by the spherical depth but not with respect to the cyberattack. 

The median number of seMCD-samples at the time of stopping is 7 but the distribution is highly skewed as can be seen by the other reported quantiles as in Table~\ref{table_quantiles_anomalies}. As a consequence of this skewness the mean number of seMCD-samples taken before the algorithm stops is 308.837.

Thus, determining the seMCD-bucket of a point in the smtp data set usually requires only a few samples, but in some cases it may take much more. 

\subsection{Binary classification}\label{subsec_binary_classification}
In this section, we illustrate the performance of the seMCD-algorithm for maximum depth classification both with simulated data and based on a data example.
  
\subsubsection{Simulation study}
To illustrate the performance of the seMCD-algorithm using simulated data, we use the modified band depth.  This choice allows us to compare the seMCD results with the true (empirical) maximum depth classifier. 

We use a fixed training data set consisting of $100$ observations from a standard Brownian motion (Class~1) and $100$ observations from a standard Brownian motion shifted by $+1$ (Class 2). The test data set contains $1000$ observations: $500$ from each of the two models described in the cases below. The percentages in parentheses indicate the proportion of observations from each model that were assigned to Class 1 by the maximum depth classifier using the exact empirical depth values.
\pagebreak
\begin{enumerate}[label=(\Roman*)]
\item (i) $(B_t)_{t \in [0,1]}$ (88.6 \%) and (ii) $(B_t+1)_{t \in [0,1]}$ (16.8 \%).
\item (i) $(B_t+0.1)_{t \in [0,1]}$ (78.8 \%) and (ii) $(B_t+0.9)_{t \in [0,1]}$ (25.0 \%).\label{TestData2}
\item (i) $(B_t+0.2)_{t \in [0,1]}$ (74.0 \%) and (ii) $(B_t+0.8)_{t \in [0,1]}$ (31.2 \%).
\item (i) $(B_t+0.3)_{t \in [0,1]}$ (65.0 \%) and (ii) $(B_t+0.7)_{t \in [0,1]}$ (33.8 \%).\label{TestData4}
\end{enumerate}

\begin{table}[!t]
        \setlength{\tabcolsep}{1.5mm}
        \begin{center}
            \begin{tabular}{l rr rr rr rr} 
               &\multicolumn{2}{c}{I}&\multicolumn{2}{c} {II}&\multicolumn{2}{c} {III}&\multicolumn{2}{c} {IV}\\
                & 
                \multicolumn{1}{c}{(i)} &   \multicolumn{1}{c}{(ii)} & 
                 \multicolumn{1}{c}{(i)} & \multicolumn{1}{c}{(ii)} &  
                 \multicolumn{1}{c}{(i)} & \multicolumn{1}{c}{(ii)} & 
                 \multicolumn{1}{c}{(i)} & \multicolumn{1}{c}{(ii)}\\\hline 
                Decision for class 1&85.6 \%&16.0 \%&77.8 \%&23.4 \%&73.0 \%&29.2 \%&63.6 \%& 31.4 \%\\
                Decision for class 2&10.8 \%&82.8 \%&18.6 \%&73.6 \%&24.6 \%&67.4 \%&34.0 \%&63.2 \%\\
                Decision for $(-0.02,0.02)$&1.2 \%&0.2 \%&1.2 \%&1.2 \%&0.8 \%&1.0 \%&1.2 \%&1.8 \%\\
                Not stopped &2.4 \%&1.0 \%&2.4 \%&1.8 \%&1.6 \%& 2.4 \%&1.2 \%&3.6 \%\\
            \end{tabular}
            \end{center}
            \caption{Results of the test data scenarios with maximum depth classification via seMCD-buckets.} 
            \label{tabBucketClassificationNEW}
\end{table}  

Clearly, the distributions of the test data sets \ref{TestData2}-\ref{TestData4} are different from the distributions of the training data set.

In the seMCD-algorithm, we use a tolerance of $\alpha=1 \ \%$ and the overlapping buckets $[-1,0)$, $(0,1]$, $(-\epsilon, \epsilon)$, with  $\epsilon=0.02$. The results are displayed in Table \ref{tabBucketClassificationNEW}. All decisions made for observations falling into the $[-1,0)$ and $(0,1]$ buckets coincide with those made by the maximum depth classifier using the exact depth values. Monitoring is stopped after a maximum of $4950$ samples, which corresponds to the number of summands in the exact depth calculation. At this point, a few observations are assigned to the bucket $(-\epsilon, \epsilon)$, and some have not yet reached a decision. A closer inspection shows that those in the bucket $(-\epsilon, \epsilon)$ typically have true depth value very close to zero and that while the undecided ones also have a small true depth value, it is closer to $0.02$.

\subsubsection{Banana data example}
We consider the banana data set \citep{BananaData}, which contains the data of $8000$ bananas and their class, given by a binary value that determines the quality of the banana. $4006$ bananas have good quality. The data set contains different features of the bananas, among which we use here the size, weight, softness, and ripeness. Figure \ref{fig_boxplots_banana} contains a boxplot  for each of these features.  

We randomly pick $1000$ bananas from each class as the training data set. We use the remaining $6000$ bananas as the test data set. For the seMCD-algorithm, we use the spherical depth (Type A with indicator kernel as in \eqref{he}), which has the advantage that it does not contract to $0$ when $d$ grows. See Remark~\ref{rem_simplicial} for a depth that does contract. We use a tolerance of $\alpha=1 \ \%$ and the overlapping buckets $[-1,0)$, $(0,1]$, $(-\epsilon, \epsilon)$ with $\epsilon=0.001$. For each datum, if the algorithm returns the bucket $[-1,0)$, the datum will be assigned to one class and, if it returns the bucket $(0,1]$, it will be assigned to the other. It will not be assigned to any class if it returns the bucket $(-\epsilon, \epsilon)$.
Additionally, if the seMCD-algorithm has not returned a bucket after
 $4.5 \cdot 10^5$ samples, we stop the algorithm without a decision. This upper bound is smaller than the number of summands that have to be evaluated for an exact computation, which is given by ${1000} \choose 2$ for each of the two training data sets.

\begin{figure}[htb]
\centering
\includegraphics[width=0.65\columnwidth, trim={2cm 2.5cm 5cm 2.1cm},clip]{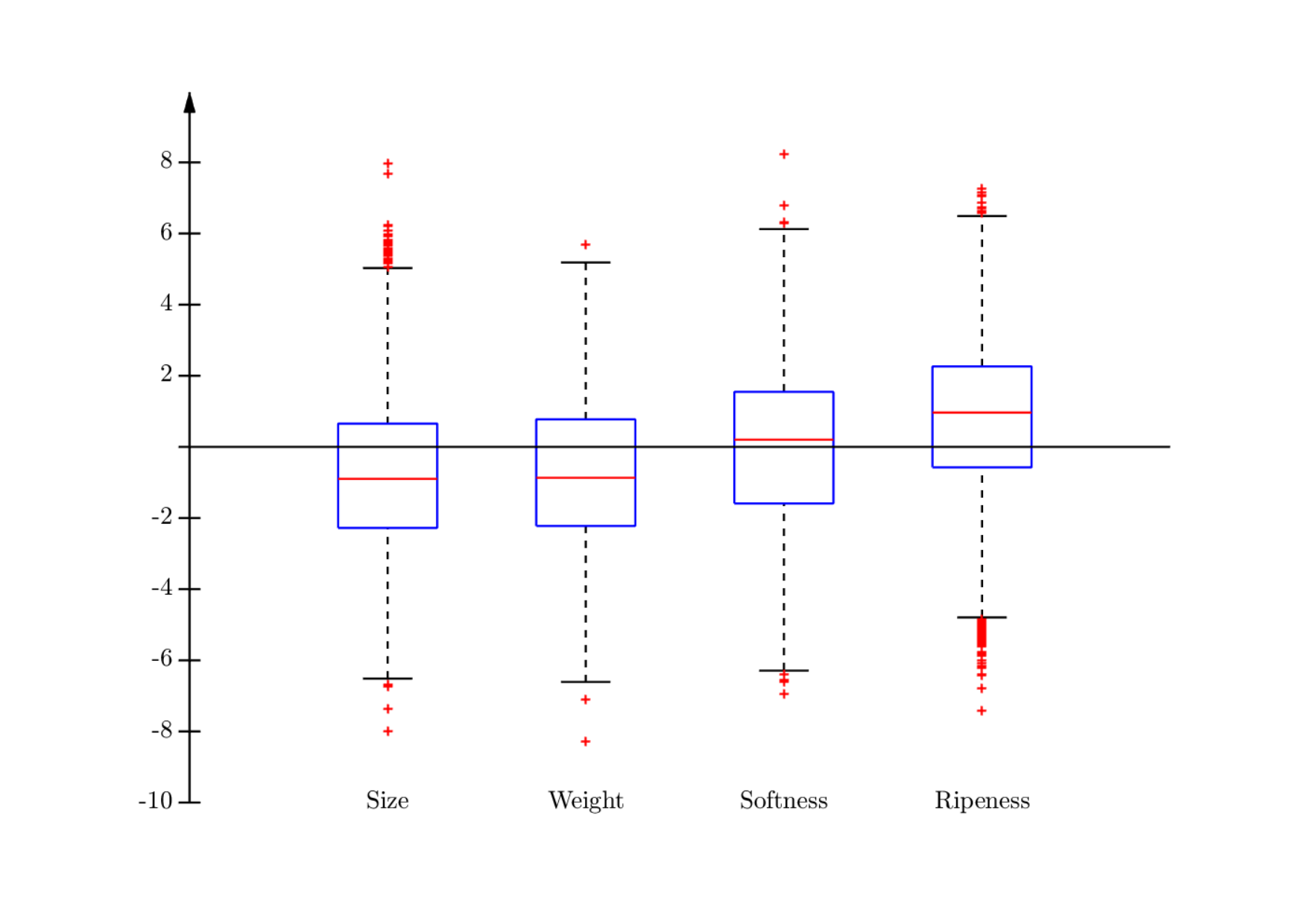}
\caption{Boxplots for each considered feature in the banana data set.}
\label{fig_boxplots_banana}
\end{figure} 

The bucket-based algorithm: 
    \begin{itemize}
        \item[--] Assigns 83.52 \% of the test data to the correct class.
        \item[--] Assigns 14.72 \%  of the test data to the wrong class.
        \item[--] Returns the bucket $(-\epsilon, \epsilon)$ for 0.50 \% of the test data. These data are not assigned to either of the two classes. 
        \item[--] Is stopped for 1.27 \% of the test data after $4.5 \cdot 10^5$ sampling steps. These data are not assigned to either of the two classes.
    \end{itemize}
   Despite the approximately 2 \% of cases for which no decision was reached, the seMCD-based classifier using spherical depth slightly outperforms MATLAB’s naïve Bayes 
classifier,  which correctly classifies only 
    82.52 \% of the test data.
    
\section{Discussion}
In this paper, we propose and explain the seMCD-algorithm designed to compute buckets that contain the true depth-based quantity of interest. This includes applications, such as empirical depth computation and maximum depth classification. The methodology is suitable for a broad class of depth functions, both multivariate and functional, that includes all Type A depth functions. 
    
For the special case of Type A depth functions with indicator kernel, we adopt boundary sequences originally introduced by ~\cite{Gandy2009} in another context. Adapting this methodology yields exact statistical guarantees. We provide a detailed description of the algorithm to compute these boundary sequences and analyse the implications for the statistical guarantees that stem from their construction. 

Since this exact methodology is limited to Type A depth functions with indicator kernel, we also propose a general methodology that works under suitable second moment, even if the underlying distribution of the summands in the seMCD-process is unknown. This generality comes at the cost of replacing the exact guarantees with asymptotic ones. Finally, we show the good performance of the seMCD-methodology in a simulation study and in a real data analysis.

\section*{Acknowledgements and funding}
This work was financially supported by the Deutsche Forschungsgemeinschaft (DFG, German Research Foundation) - 314838170, GRK 2297 MathCoRe and by grant PID2022-139237NB-I00 funded by “ERDF A way of making Europe” and MCIN/AEI/10.13039/501100011033. This article builds on Chapter II of the PhD thesis of the first author \citep{GnettnerDissertation2024}.

\bibliography{literatureCOMPLETE}

\appendix
\section{Examples of E-depth functions}\label{appendix_depths}
\subsection{Multivariate Type A depths with indicator kernel}\label{subsec_multivariate_depths_A}
Let $\mathbb{S}:=\mathbb{R}^d$. For the simplicial depth \citep{LiuSimplicial}, in \eqref{DA} $r:=d+1$ and
\begin{align}\label{hs}
G^{S}(\bz,[\bm{\eta}_1,\ldots,\bm{\eta}_{d+1}])=\mathds 1_{\left\{\bz \in \mathrm{conv}\left\{\bm{\eta}_{1}, \ldots,\bm{\eta}_{d+1}\right\}\right\}},
\end{align}
where $\mathrm{conv}$ stands for the convex hull. The condition $\bz$ lies in $\mathrm{conv}\{\bm{\eta}_{1}, \ldots,\bm{\eta}_{{d+1}}\}$ is equivalent to checking whether a vector $\bm{\theta} \in \mathbb R^{d+1}$ with entries $\theta_j\geq 0$ for $j=1,\ldots, d+1$ and $ \sum_{j=1}^{d+1} \theta_j = 1$ exists such that $ \sum_{j=1}^{d+1} \theta_j \cdot \bm{\eta}_{j} = \bz$, which is implemented in standard software by checking if the following linear program has a feasible solution: 
\begin{align*}
	&\min_{\bm{\theta} \in \mathbb R^{d+1}} \quad  \langle \bm{1}, \bm \theta \rangle, \quad\text{with } \bm{1}=(1, \ldots,1)^T,\\
    &\text{s.t.} \quad  \sum_{j=1}^{d+1} \theta_j \cdot \bm{\eta}_{j} = \bz, \quad \sum_{j=1}^{d+1} \theta_j = 1,\quad \theta_j \geq 0 \text{ for each } j =1,\ldots,d+1.
\end{align*}
The  empirical simplicial depth given $\mathbb{R}^d$-valued random vectors $\bZ_1,\ldots,\bZ_n$, i.e.\ the standard estimator as in \eqref{typeAdepth}  of the theoretical simplicial depth, is given by
\begin{align}
    D_A(\bz;\bZ_1, \ldots,\bZ_n) = {n \choose d+1}^{-1} \sum_{1 \leq i_1<...<i_{d+1} \leq n} 
    \mathds 1_{\left\{\bz \in \mathrm{conv}\left\{\bZ_{i_1}, \ldots,\bZ_{i_{d+1}}\right\}\right\}}.
\end{align} 
    
Other well-known examples take in in \eqref{DA} $r:=2$, independently of $d$. They are: spherical depth \citep{elmore2006},  with 
\begin{align}\label{he}
    G^{E}(\bz;[\bm{\eta}_1,\bm{\eta}_2])= \mathds 1_{\left\{ (\bm{\eta}_1-\bz)^T \cdot (\bm{\eta}_2-\bz) \leq 0 \right\}},
\end{align}
lens depth \citep{LiuModarres2011}, with 
\begin{align}\label{hl}
    G^{L}(\bz;[\bm{\eta}_1,\bm{\eta}_2])= \mathds 1_{\left\{ \|\bm{\eta}_1 - \bm{\eta}_2\| \geq \max(\|\bz - \bm{\eta}_1\|,\|\bz - \bm{\eta}_2\|)\right\}}
\end{align}
and $\beta$-skeleton depth \citep{yang2018beta}, with 
\begin{align}\label{hb}
    G^{\beta}(\bz;[\bm{\eta}_1,\bm{\eta}_2])= \mathds 1_{\left\{ \|\bm{\eta}_1 - \bm{\eta}_2\| \geq \max\{\|\bm{\eta}_i+(\frac{2}{\beta}-1)\bm{\eta}_j-\frac{2}{\beta}\bz\|: \; i,j=1,2, i\neq j\}\right\}}.
\end{align}
Unlike  \eqref{hs}, conditions  \eqref{he}-\eqref{hb} are computationally easy to check.

\subsection{Functional Type A depths}\label{subsec_functional_depth_A}

Let $I \subseteq \mathbb R$ be a compact interval. Let $\mathbb{S}$ be an infinite dimensional Hilbert space on $\mathcal V$, with $\bz(t)\in \mathbb R$ for each $\bz\in\mathbb{S}$ and $t\in \mathcal V$. Let $\| \cdot\|_{L^2}$ be the $L^2$ norm and $r:=1$. For the h-depth \citep{Cuevas2007Projection}, we have
\begin{align}G^{\mbox{h}}(\bz;\bm{\eta}) =\frac{1}{\mbox{h}}\,K \left(\frac{\| \bz-\bm{\eta}\|_{L^2}}{\mbox{h}}\right),
\end{align}
with $K$ the Gaussian kernel. The bandwidth parameter $h$ can be a prefixed constant or data-driven  \citep{AliciaHeatherConstruction}.

Let $\mathbb{S}$ be the space of continuous functions on $\mathcal V$, with $\bz(t)\in \mathbb R$ for each $\bz\in\mathbb{S}$ and $t\in \mathcal V$. Let $r:=2$.
Making use of 2 curves-bands, for the band depth  \citep{LR2009}, we have
\begin{align}\label{band}
    G^{B}(\bz;[\bm{\eta}_1,\bm{\eta}_2]) =\mathds 1_{\left\{ \min( {\bm{\eta}}_1(t),{\bm{\eta}}_2(t)) \leq {\bz}(t)  \leq \max( {\bm{\eta}}_1(t),{\bm{\eta}}_2(t)) \mbox{ for all } t \in \mathcal V \right\}}
\end{align}
and for the modified band depth \citep{LR2009}
\begin{align}
    G^{M}(\bz;[\bm{\eta}_1,\bm{\eta}_2]) =\frac{ \lambda\left(  \left\{t \in \mathcal V: \min( {\bm{\eta}}_1(t),{\bm{\eta}}_2(t)) \leq {\bz}(t)  \leq \max( {\bm{\eta}}_1(t),{\bm{\eta}}_2(t)) \right\}\right)}{ \lambda(\mathcal V)},\label{eq_kern_mod_band}
\end{align}
where  $\lambda(\cdot)$ denotes the Lebesgue measure. The 2 curves-bands case is the most commonly used special case of a more general class based on $J$ curves-bands.  The seMCD-methodology discussed in this paper only  works for $J=2$. 

These depth functions can also be applied to multivariate data.

\subsection{Integrated depths}\label{subsec_IRW_depth}
Let $\mathbb{S}$ be a Banach space with a separable dual $\mathbb{S}^*$ and $\bm \eta$ a random element with probability measure on $\mathbb{S}^*$, for instance a non-degenerate Gaussian measure if $\mathbb{S}$ is infinite-dimensional or the uniform distribution on the unit sphere if it is finite-dimensional. Let $\bz\in \mathbb{S}$.  For the  integrated dual depth \citep{CuevasFraimanIntegrated2009} with respect to a distribution $P$, we have in \eqref{DI}
\begin{align}\label{dual}
    I^D(\bz; \bm{\eta}) = D_u( u(  \bm{\eta},\bz), u(\bm{\eta},\bZ)),
\end{align}
where $\bZ\sim P$, the quantity $D_u$ is a univariate depth function  and $u$ is a function related to it such that, conditioned on $\bm \eta$, it holds $u(  \bm{\eta},\bz)\in\mathbb{R}$ and $u(\bm{\eta},\bZ)$ is a random variable. 
 
In the particular case that $\mathbb{S}=R^d$, $\bm \eta$ is a random element with the uniform distribution on the unit sphere in $\mathbb R^d$  and $D_u$ is the univariate Tukey depth \citep{Tukey}, we have the integrated rank-weighted (IRW) depth \citep{Ramsay2019}. Thus, \eqref{dual} results in
\begin{align}\label{T}
        I^R(\bz; \bm{\eta}) = \min\left\{
        \P(  \bm{\eta}^T \cdot \bZ \leq  \bm{\eta}^T \cdot \bz | \bm{\eta}), \P(  \bm{\eta}^T \cdot \bZ \geq  \bm{\eta}^T \cdot \bz | \bm{\eta})  \right\}.
\end{align} 
Using $\bZ_1,\ldots,\bZ_n\overset{\textrm{iid}}{\sim} P$, we can write an empirical version of $\P( \bm{\eta}^T \cdot \bZ \leq  \bm{\eta}^T \cdot \bz|\bm{\eta})$ as
\begin{align*} 
   \frac{1}{n}\sum_{j=1}^n\mathds 1_{\{
   \bm{\eta}^T \cdot \bZ_j \leq  \bm{\eta}^T \cdot \bz \}}
\end{align*} 
and analogously for the other probability in \eqref{T}. Thus, approximating an empirical version of the IRW depth simply requires to also do a numerical approximation of the expectation.  Additionally, $\bm{\eta} \deq \bm W/\|\bm W\|$ holds with $\bm W$ being a standard normal distribution on $\mathbb R^d$ \citep{Marsaglia1972}. Then, because of the inequalities in \eqref{T}, it is equivalent to take $\bm{\eta}$ distributed as a  standard normal on $\mathbb R^d$. Thus, the empirical IRW depth can easily be approximated by Monte Carlo methods, fitting the seMCD-framework.
\end{document}